\newtheorem{definition}{Definition}
\newtheorem{lemma}{Lemma}
\newtheorem{corollary}{Corollary}
\newtheorem{theorem}{Theorem}
\newtheorem{proposition}{Proposition}
\newtheorem{conjecture}{Conjecture}
\theoremstyle{remark}
\newtheorem{remark}{Remark}
\newtheorem{openproblem}{Open Problem}
\DeclareMathOperator{\rank}{Rank}
\newcommand\wt{\mathrm{wt}}
\newcommand\ftwo{\mathbb{F}_{2}}
\newcommand{\F}{\mathbb{F}}
\newcommand{\GL}{\mathrm{GL}}
\newcommand\tr[1]{\textrm{Tr}\left(#1\right)}
\newcommand\img[1]{\mathcal{I}m(#1)}
\renewcommand\footnotemark{}
\title{Trims and Extensions of Quadratic APN Functions\thanks{This version of the article has been accepted for publication, after peer review
but is not the Version of Record and does not reflect post-acceptance improvements, or any
corrections. The Version of Record is available online at: \url{https://doi.org/10.1007/s10623-022-01024-4}.}}
\author[1]{Christof Beierle}
\affil[1]{Ruhr University Bochum, Universit\"atsstra\ss e 150, 44801 Bochum, Germany

firstname.lastname@rub.de}
\author[1]{Gregor Leander}
\author[2]{L\'eo Perrin}
\affil[2]{Inria, 2 rue Simone Iff, 75012, Paris, France

leo.perrin@inria.fr}
\date{}
\begin{document}

\maketitle

\begin{abstract}
In this work, we study functions that can be obtained by restricting a vectorial Boolean function $F \colon \ftwo^n \rightarrow \ftwo^n$ to an affine hyperplane of dimension $n-1$ and then projecting the output to an $n-1$-dimensional space. We show that a multiset of $2 \cdot (2^n-1)^2$ EA-equivalence classes of such restrictions defines an EA-invariant for vectorial Boolean functions on $\ftwo^n$. Further, for all of the known quadratic APN functions in dimension $n < 10$, we determine the restrictions that are also APN. Moreover, we construct 6,368 new quadratic APN functions in dimension eight up to EA-equivalence by extending a quadratic APN function in dimension seven. A special focus of this work is on quadratic APN functions with maximum linearity. In particular, we characterize a quadratic APN function $F \colon \ftwo^n \rightarrow \ftwo^n$ with linearity of $2^{n-1}$ by a property of the ortho-derivative of its restriction to a linear hyperplane. Using the fact that all quadratic APN functions in dimension seven are classified, we are able to obtain a classification of all quadratic 8-bit APN functions with linearity $2^7$ up to EA-equivalence.

{\bf Keywords:} almost perfect nonlinear, EA-equivalence, EA-invariant, linearity, restriction, extension
\end{abstract}

\section{Introduction}
Let us be given two integers $n,m \in \mathbb{N}$ with $m < n$ and two vectorial Boolean functions $F:\F_2^n \to \F_2^n$ and $G:\F_2^{m} \to \F_2^{m}$. We say that $F$ is an {\emph{extension}} of $G$ and that $G$ is a {\emph{restriction}} of $F$, denoted $G \prec  F$, if there exists an affine injective mapping $\phi:\F_2^{m} \to \F_2^n$ and an affine surjection $\varphi:\F_2^n \to \F_2^{m}$ such that
$G=\varphi \circ F \circ \phi$.
\begin{center}
\includegraphics[width=.3\columnwidth]{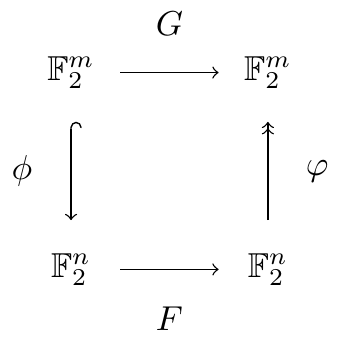}
\end{center}
While the definition is sound for any vectorial Boolean functions $F$ and $G$, we are mainly interested in the case where $F$ and $G$ are both APN functions. 

In the remainder of this work, we concentrate on the case of $m=n-1$, but let us recall first that there are well-known examples for APN restrictions, resp., APN extensions. Indeed, given an APN monomial function $F$ on $\F_{2^{n}}$, then the restriction $G$ of $F$ to any proper subfield of $\F_{2^{n}}$ is APN. 

The notion of restriction defines a strict partial ordering on the set of vectorial Boolean functions with the same dimension in the input and the output. Indeed the relation is irreflexive, as no function is its own extension simply as we exclude the case $n=m$ by requiring $m <n$ . For the same reason, the relation is antisymmetric, i.e. no function can be both an extension and a restriction of any given function. Transitivity follows directly from the definition. 

The notion of restriction for vectorial Boolean functions also defines a notion of restriction on EA-equivalence classes of vectorial Boolean functions. We say that the EA-equivalence class of $F$ is the \emph{extension} of the EA-equivalence class of $G$ if there exist a function $G'$ which is EA-equivalent to $G$ such that $G' \prec F$. This is well-defined as, given two EA-equivalent functions $F$ and $F'$ and a function $G$ that is the restriction of $F$, then there exists an EA-equivalent function $G'$ to $G$ such that $G'$ is the restriction of $F'$. 

As mentioned above, we only focus on the case $m=n-1$, that is extensions and restrictions by adding or removing only a single dimension. One question is how to algorithmically check for a given APN function whether it is an extension or a restriction of another APN function and whether we can discover new APN functions as extensions or restrictions of known functions. In Section~\ref{sec:trimming}, we define an operation called \emph{trimming}, which restricts a vectorial Boolean function $F$ on $\ftwo^n$ to a linear or affine hyperplane of dimension $n-1$ and then projects the output to an $n-1$-dimensional space by discarding one \emph{component} function. Compared to the case in which only a \emph{coordinate} is discarded, we show that this operation is sound in the sense that (1), any function on $\ftwo^{n-1}$ that is a restriction of $F$ is EA-equivalent to a trim of $F$ and, (2), EA-equivalent functions yield EA-equivalent trims when considering all $2 \cdot (2^n-1)^2$ possibilities to choose the affine hyperplanes and the component functions. This way, we obtain an EA-invariant, which we call \emph{trim spectrum}. We then analyze how many of the known quadratic APN functions in dimensions $n < 10$ contain APN restrictions. The results are visualized in a so-called \emph{trimming graph}, see Figures~\ref{fig:trimming-graph-n7} and~\ref{fig:trimming-graph-n8}.

For constructing new APN functions that have a given APN function as a restriction, the focus on EA-equivalence classes is helpful as, in a nutshell, extending a function by one dimension boils down to constructing one additional coordinate function and deducing the values for the remaining input values. More precisely, an APN function $G$ on $\F_2^n$ is a restriction of an APN function in dimension $n+1$ if there exist Boolean functions $r_1,r_2: \F_2^{n} \to \F_2$ and a vectorial Boolean function $G': \F_2^{n} \to \F_2^{n}$ such that 
\begin{eqnarray}
\label{eq:general_form}
	T:
	\begin{cases}
	\ftwo^n \times \ftwo &\to \ftwo^n \times \ftwo \\
	\left(\begin{array}{c}
		x  \\
		y 
	\end{array}\right) &\mapsto
	\left(\begin{array}{c}
		G(x)  \\
		r_1(x) 
	\end{array}\right) \cdot (y+1)+
	\left(\begin{array}{c}
		G'(x)  \\
		r_2(x) 
	\end{array}\right)\cdot y
	\end{cases}
\end{eqnarray}
is APN.
While for very small dimensions it is possible to check if a function $G$ is an APN restriction of some function, this becomes prohibitively expensive as $n$ increases.  However, focusing on quadratic functions  for $T$ and $G$ allows for much stronger results. As we will detail in Section~\ref{sec:r-extendable}, in the case where we want to determine if a given quadratic APN function $G$ is a restriction of a quadratic APN function $T$, the function $G'$ differs from $G$ itself only by a linear mapping $L$ and $r_2$ differs by a linear Boolean function $\ell$ from $r_1$. This, obviously, reduces the search space significantly. By using a recursive tree search with backtracking, we are able to construct 6,368 new quadratic APN functions in dimension $n=8$ up to EA-equivalence. Those new functions are APN extensions of quadratic 7-bit APN functions by construction and it is worth remarking that the vast majority of the previously-known APN functions in dimension $n=8$ are not, i.e., they do not have APN functions within their trim spectrum. As we found out that almost all of the quadratic APN functions in dimension $n=7$ can be extended to quadratic APN functions in dimension $n=8$, we conjecture that for each dimension $n$, there exist APN functions $F_i \colon \ftwo^i \rightarrow \ftwo^i, i=2,\dots,n$ such that $F_2 \prec F_3 \prec \dots \prec F_{n-1} \prec F_n$, so called \emph{recursive APN functions}.

For APN functions in the form of Equation~(\ref{eq:general_form}), the case where $T$ is quadratic and $r_1$ is constant and equal to zero is of particular interest. Indeed, as it was already observed in~\cite[Remark 12]{DBLP:journals/tit/Carlet18}, any quadratic APN function on $\F_2^n$ with linearity $2^{n-1}$, i.e., the highest possible linearity of a quadratic APN function, is EA-equivalent to the extension $T$ of a quadratic APN function $G$ with choosing $r_1 = 0$ (as in the form of Equation~(\ref{eq:general_form})). Moreover, we show that for the ortho-derivative $\pi_G$ of $G$, we have $\langle \pi_G(\alpha), L(\alpha) \rangle=1$ for all $\alpha \in \ftwo^n \setminus \{0\}$ with $\ell(\alpha)=0$. This observation allows us in particular to classify all quadratic APN functions with maximum linearity in dimension eight, simply by using the recent classification of all quadratic 7-bit APN functions into EA-equivalence classes and by recovering $L$ from the ortho-derivative of $G$ using linear algebra. 

In the last part of this paper, we provide some further observations on quadratic APN functions with maximum linearity. In particular, we provide a simple representation of those functions and study their Walsh spectra, as well as their ortho-derivatives.  We conclude by listing several open problems for future work.

\subsection{Related Work}
The effect on the differential uniformity and on the linearity of restricting a vectorial Boolean function to an affine subspace was first studied in~\cite{DBLP:conf/fse/Nyberg94}.
Restrictions of APN functions that are obtained by discarding one output coordinate have also been studied before, see~\cite{Gorodilova+2016+193+202,DBLP:journals/ccds/Idrisova19}.

APN functions are completely classified up to $n \leq 5$, see~\cite{DBLP:journals/dcc/BrinkmannL08}. For $n = 6,7$, we know a complete classification of quadratic and cubic, and quadratic APN functions up to EA-equivalence, respectively~\cite{langevin,DBLP:journals/ccds/Calderini20,DBLP:journals/iacr/KalginI20a}. For $n=8$, at the time of submission of this manuscript in August 2021, we know 26,524 distinct quadratic APN functions up to EA-equivalence. Those are the ones listed by Edel and Pott~\cite{DBLP:journals/amco/EdelP09} in 2009, the 10 functions constructed in~\cite{weng2013quadratic}, the 8,157 functions found by the QAM approach in 2014~\cite{DBLP:journals/dcc/YuWL14,DBLP:journals/iacr/YuWL13}, the two functions coming from the Taniguchi family~\cite{DBLP:journals/dcc/Taniguchi19a}, the 12,921 functions found by the recursive tree search utilizing linear self-equivalences~\cite{DBLP:journals/corr/abs-2009-07204}, and the 5,412 functions found by the QAM approach very recently~\cite{DBLP:journals/iacr/YuP21}. The authors of~\cite{DBLP:journals/iacr/YuP21} conjectured that there are more than 50,000 distinct quadratic APN functions in dimension $n=8$ up to EA-equivalence. 

One interesting result reported in~\cite{DBLP:journals/corr/abs-2009-07204} is the fact that, among the 12,921 found APN functions in dimension $n=8$, four of them have a linearity of $2^{n-1}$, which is the highest value that can possibly be achieved for quadratic APN functions. Note that for $n \leq 4$, every quadratic APN function admits a linearity of $2^{n-1}$ trivially.\footnote{Since there exist linear APN functions in dimension $n \leq 2$, the linearity can be $2^n$ in those cases.} In odd dimension $n \geq 5$, a quadratic APN function cannot have linearity $2^{n-1}$, since every such function must be almost bent~\cite{DBLP:journals/dcc/CarletCZ98}. In dimension $n=6$, there is exactly one quadratic APN function up to EA-equivalence which admits the highest possible linearity of $2^{n-1}$. The existence of quadratic APN functions in dimension $n$ having linearity $2^{n-1}$ is still unknown for $n>8$. Preliminary observations on such functions have been remarked in~\cite{DBLP:journals/tit/Carlet18}.

In~\cite{DBLP:journals/iacr/KalginI20a}, the authors proposed a secondary approach to search for quadratic APN functions in dimension $n+1$ by extending a quadratic APN function in dimension $n$. The approach was to guess the $(n+1)$-th coordinate function and to utilize necessary properties of the algebraic normal form of the extended function. The search was then quite similar to the QAM approach~\cite{DBLP:journals/dcc/YuWL14}. However, no results were reported for $n \geq 7$.

\section{Notation and Preliminaries}
By $\mathbb{N}$, we denote the set of natural numbers $\{1,2,3,\dots\}$ and by $\mathbb{F}_q$, we denote the finite field with $q$ elements. In this work, we focus on functions between finite-dimensional $\ftwo$-vector spaces, also called \emph{vectorial Boolean functions}. For $n \in \mathbb{N}$, the set of invertible linear automorphisms from $\ftwo^n$ to itself is denoted by $\GL(n,\ftwo)$ and in our notation, we use matrices and their corresponding linear mappings interchangeably. In the following, let $\mathbb{V},\mathbb{V}',\mathbb{W},\mathbb{W}'$ be finite-dimensional (non-zero) vector spaces over $\ftwo$. We denote the set of all linear mappings from $\mathbb{V}$ to $\mathbb{W}$ by $\mathcal{L}(\mathbb{V},\mathbb{W})$.  An \emph{affine function} from $\mathbb{V}$ to $\mathbb{W}$ is a function of the form $x \mapsto L(x) + c$, where $L \in \mathcal{L}(\mathbb{V},\mathbb{W})$ and $c \in \mathbb{W}$. Two vectorial Boolean functions $F \colon \mathbb{V} \rightarrow \mathbb{W}, G \colon \mathbb{V}' \rightarrow \mathbb{W}'$ are called \emph{extended-affine equivalent} (\emph{EA-equivalent} for short) if there exist affine bijections $A \colon \mathbb{V}' \rightarrow \mathbb{V}, B \colon \mathbb{W} \rightarrow \mathbb{W}'$ and an affine function $C \colon \mathbb{V}' \rightarrow \mathbb{W}'$ such that $G = B \circ F \circ A + C$. If $C = 0$, the functions $F$ and $G$ are called \emph{affine-equivalent}, and if $C=0$ and $A$ and $B$ are linear bijections, the two functions $F$ and $G$ are called \emph{linear-equivalent}. We are interested in vectorial Boolean functions only up to EA-equivalence since the most important cryptographic properties are invariant under this equivalence relation, most importantly the algebraic degree, the differential uniformity, as well as the linearity. We recall the definition of those three notions in the following paragraphs. For a comprehensive textbook on the theory of Boolean functions and vectorial Boolean functions, we refer to~\cite{carlet_2021}.

Any Boolean function $f \colon \ftwo^n \rightarrow \ftwo$ can be uniquely expressed as a multivariate polynomial in $\ftwo[X_1,\dots,X_n]/(X_1^{2}+X_1,\dots,X_n^{2}+X_n)$ via \begin{equation*}f \colon \ftwo^n \rightarrow \ftwo, \quad x \mapsto \sum_{u \in \ftwo^n}\left(a_u\prod_{i=1}^n x_i^{u_i}\right), \quad a_u \in \ftwo.\end{equation*}
The \emph{algebraic degree} of $f$ is defined as $\max_{\{u \in \ftwo^n \mid a_u \neq 0\}}  \wt(u)$, where $\wt(u)$ denotes the Hamming weight of the binary vector $u$.
Without loss of generality, a function from $\mathbb{V}$ to $\mathbb{W}$ with $\dim(\mathbb{V})=n$, $\dim(\mathbb{W})=m$ can be represented as a function $F \colon \ftwo^n \rightarrow \ftwo^m$ by applying a linear function in the input and in the output. In this representation, $F$ can be given by its $m$ \emph{coordinate functions} $f_1,\dots,f_m \colon \ftwo^n \rightarrow \ftwo$ as $F(x_1,\dots,x_n) = (f_1(x_1,\dots,x_n),\dots,f_m(x_1,\dots,x_n))$. 
  The \emph{algebraic degree} of a vectorial Boolean function from $\mathbb{V}$ to $\mathbb{W}$ is defined as the maximum algebraic degree over all its coordinate functions when represented as a function from $\ftwo^n$ to $\ftwo^m$. Functions with algebraic degree equal to 2 are called \emph{quadratic} and functions with algebraic degree at most 1 are called \emph{affine}.

The \emph{differential uniformity}~\cite{DBLP:conf/eurocrypt/Nyberg93} of a function $F \colon \mathbb{V} \rightarrow \mathbb{W}$ is defined as $\max_{\alpha \in \mathbb{V} \setminus \{0\},\beta \in \mathbb{W}}\lvert\{x \in \mathbb{V} \mid F(x) + F(x+\alpha)=\beta\}\rvert$. It is straightforward to observe that the differential uniformity of a function $F \colon \mathbb{V} \rightarrow \mathbb{W}$ is an even integer larger than or equal to 2. For $\ftwo$-vector spaces $\mathbb{V}, \mathbb{W}$ of the same finite dimension, a function $F \colon \mathbb{V} \rightarrow \mathbb{W}$ with the least possible differential uniformity of 2 is called \emph{almost perfect nonlinear} (or \emph{APN} for short)~\cite{DBLP:conf/crypto/NybergK92}. 

Let $\langle \cdot, \cdot \rangle_{\mathbb{V}} \colon \mathbb{V} \times \mathbb{V} \rightarrow \ftwo$ and $\langle \cdot, \cdot \rangle_{\mathbb{W}} \colon \mathbb{W} \times \mathbb{W} \rightarrow \ftwo$ be non-degenerate symmetric bilinear forms and let $F \colon \mathbb{V} \rightarrow \mathbb{W}$. A \emph{component} of $F$ is a function $\mathbb{V} \rightarrow \ftwo, x \mapsto \langle b,F(x) \rangle_{\mathbb{W}}$, where $b \in \mathbb{W} \setminus \{0\}$.  The \emph{Walsh transform} of $F \colon \mathbb{V} \rightarrow \mathbb{W}$ at point $(\alpha,\beta) \in \mathbb{V} \times \left(\mathbb{W} \setminus \{0\}\right)$ is defined as 
\begin{align*} \widehat{F}_\beta(\alpha) \coloneqq \sum_{x \in \mathbb{V}}(-1)^{\langle \alpha,x\rangle_{\mathbb{V}}+\langle \beta,F(x)\rangle_{\mathbb{W}}}\end{align*}
and the \emph{linearity} of $F$ corresponds to the maximum absolute value of its Walsh transform, i.e., $\max_{\alpha \in \mathbb{V},\beta \in \mathbb{W}\setminus \{0\}} \lvert \widehat{F}_\beta(\alpha) \rvert$. The linearity is a measure on how well a component can be approximated by an affine function. When $\mathbb{V} = \ftwo^n$ for an integer $n \in \mathbb{N}$, we use $\langle x,y \rangle_{\mathbb{V}} = \langle x, y \rangle = \wt (\sum_{i=1}^n x_iy_i) \mod 2$, where $x = (x_1,x_2,\dots,x_n), y = (y_1,y_2,\dots,y_n) \in \ftwo^n$ and $\wt(k)$ denotes the Hamming weight of the binary expansion of $k \in \mathbb{N} \cup \{0\}$. In case of $\mathbb{V} = \mathbb{F}_{2^n}$ for an integer $n \in \mathbb{N}$, we use $\langle x,y \rangle_{\mathbb{V}} = \tr{xy}$, where $x,y \in \mathbb{F}_{2^n}$ and $\mathrm{Tr}$ denotes the absolute trace function, defined as
\begin{align*} \tr{x} = x^{2^0} + x^{2^1} + x^{2^2} + \dots + x^{2^{n-1}}\;.\end{align*}

It is well known that the Walsh transform of a quadratic Boolean function $f \colon \ftwo^n \rightarrow \ftwo$ only takes values in $\{ 0,\pm 2^{\frac{n+k}{2}}\}$, where $k$ is the dimension of the vector space $\{ a \in \ftwo^n \mid  x \mapsto f(x) + f(x+a) \text{ is constant}\}$, see~\cite[Prop.\@ 55]{carlet_2021}. Therefore, the Walsh transform of a quadratic vectorial Boolean function $F \colon \mathbb{V} \rightarrow \mathbb{W}$ can only take 0 or powers of 2 as absolute values. In case that $F$ is APN and $\dim(\mathbb{V}) = \dim(\mathbb{W}) = n >2$, we know that the linearity of $F$ cannot be equal to $2^n$, see~\cite[Prop.\@ 161]{carlet_2021}. Therefore, when $F$ is APN and quadratic, the linearity of $F$ can be at most $2^{n-1}$, which motivates the following definition.

\begin{definition}
  Let $n \in \mathbb{N}, n>2$ and let $\mathbb{V},\mathbb{W}$ be $n$-dimensional $\ftwo$-vector spaces. We say that a quadratic APN function $F \colon \mathbb{V} \rightarrow \mathbb{W}$ has \emph{maximum linearity} if it has linearity $2^{n-1}$. 
\end{definition}

If $n \in \mathbb{N}$ is odd, every quadratic APN function is almost bent, i.e., its Walsh transform can only take values in $\{ 0, \pm, 2^{\frac{n+1}{2}}\}$, see~\cite{DBLP:journals/dcc/CarletCZ98}. Therefore, quadratic APN functions with maximum linearity can only exist in even dimensions $n$ or for $n=3$. Until recently, we only knew the existence of quadratic APN functions with maximum linearity up to $n=6$, see~\cite{DBLP:journals/amco/EdelP09} for an example in dimension 6. In~\cite{DBLP:journals/corr/abs-2009-07204}, the authors found 4 EA-inequivalent instances in dimension $n=8$. It is still an open problem whether quadratic APN functions with maximum linearity exist for $n \geq 10$. 

In this work, we need the notion of the ortho-derivative of a quadratic APN function, which is defined as follows.
\begin{definition}[\cite{ortho_paper}]
Let $\mathbb{V}, \mathbb{W}$ be $\ftwo$-vector spaces of the same finite dimension and let $\langle \cdot, \cdot \rangle_{\mathbb{W}} \colon \mathbb{W} \times \mathbb{W} \rightarrow \ftwo$ be a non-degenerate symmetric bilinear form. Let $G \colon \mathbb{V} \rightarrow \mathbb{W}$ be a quadratic APN function. The \emph{ortho-derivative} of $G$ is defined as the unique function $\pi_G \colon \mathbb{V} \rightarrow \mathbb{W}$ with $\pi_G(0) = 0$ such that, for all $\alpha \in \mathbb{V} \setminus \{0\}$, we have $\pi_G(\alpha) \neq 0$ and
\begin{align*} \forall x \in \mathbb{V} \colon \langle \pi_G(\alpha), B_{\alpha}(x) \rangle_{\mathbb{W}} = 0\;,\end{align*}
where $B_{\alpha} \colon \mathbb{V} \rightarrow \mathbb{W}, x \mapsto G(x) + G(x+\alpha) + G(\alpha) + G(0)$. 
\end{definition}

The ortho-derivative $\pi_G$ of a given quadratic APN function $G : \ftwo^n \to \ftwo^n$ is computed from the difference distribution table (DDT) of $G$ by first identifying, for every $a \in \ftwo^n$, the linear part of the affine space $\{b \in \ftwo^n \mid \mathsf{DDT}_G[a,b] = 2 \}$, and then finding the value $\pi_G(a)$ which is orthogonal to it.\footnote{We recall that $\mathsf{DDT}_G[a,b]$ is defined as $\lvert \{x \in \ftwo^n \mid G(x) + G(x+a) = b\}\rvert$.} Overall, the most computationally demanding step is the computation of the DDT, which takes time $\mathcal{O}(2^{2n})$.

The authors of~\cite{ortho_paper} showed that for two EA-equivalent quadratic APN functions $G,G' \colon \ftwo^n \rightarrow \ftwo^n$, the ortho-derivatives $\pi_G$ and $\pi_{G'}$ are linear-equivalent. Indeed, the linear-equivalence of the ortho-derivatives is a strongly discriminating EA-invariant for quadratic APN functions, which allows in many cases to efficiently detect the EA-inequivalence of two quadratic APN functions. 

\section{Function Trimming}
\label{sec:trimming}
For a given function $F\colon \ftwo^n \rightarrow \ftwo^n$, we want to consider \emph{all} the restrictions $G \prec F$ with $G \colon \ftwo^{n-1} \rightarrow \ftwo^{n-1}$. Since there are many choices for the affine injective mappings  $\phi \colon \ftwo^{n-1} \rightarrow \ftwo^n$ and the affine surjections $\varphi \colon \ftwo^n \rightarrow \ftwo^{n-1}$, this would quickly become infeasible already for very small values of $n$. However, as we are only interested in vectorial Boolean functions up to EA-equivalence, the particular restrictions and projections that have to be taken into account can be significantly reduced, as we outline in the following.  For an element $\alpha \in \ftwo^n \setminus \{0\}$, we denote by $\alpha^{\perp}$ its orthogonal, i.e., $\alpha^{\perp} = \{x \in \ftwo^n \mid \langle \alpha,x \rangle = 0\}$, which is an $n-1$-dimensional linear hyperplane. We denote its complement set $\ftwo^n \setminus \alpha^{\perp}$ by $\overline{\alpha^{\perp}}$. 

For a non-zero element $\beta \in \ftwo^n$ and an element $\gamma \in \ftwo^n$ such that $\langle \beta, \gamma \rangle = 1$, we define $\rho_{\beta}^{(\gamma)}$ as the function
\begin{equation*} \rho_{\beta}^{(\gamma)} \colon \ftwo^n \rightarrow \gamma^{\perp}, \quad x \mapsto x + \beta \cdot \langle \gamma,x\rangle\;.\end{equation*}
If we represent $\ftwo^n$ as the direct sum $\ftwo^n = \gamma^{\perp} \oplus \{0, \beta\}$, for any vector $x = x_{\beta} \oplus x_{\gamma} \in \ftwo^n$ with $x_{\gamma} \in \gamma^{\perp}, x_{\beta} \in \{0,\beta\}$, we have $\rho_{\beta}^{(\gamma)}(x) = x_{\gamma}$, i.e, $\rho_{\beta}^{(\gamma)}$ is a projection of $\ftwo^n$ to $\gamma^{\perp}$. The following definition precisely captures the idea of restricting the input and projecting the output of a vectorial Boolean function.

\begin{definition}[Trim along $(H,\beta)$]
Let $n \in \mathbb{N}, n \geq 2$, $\beta \in \ftwo^n$ be a non-zero element and let $H \subseteq \ftwo^n$ be a hyperplane of dimension $n-1$, so that $H = \alpha^\perp$ or $H = \overline{\alpha^\perp}$ for some non-zero $\alpha \in \ftwo^n$.
Let $\epsilon \in \ftwo^n$ be zero if $H = \alpha^{\perp}$ and $\epsilon \notin \alpha^{\perp}$ otherwise, and let $\gamma \in \ftwo^n \setminus \beta^{\perp}$. 
The \emph{trim of a function $F$ along $(H,\beta)$ with respect to $\epsilon,\gamma$} is then defined as
\begin{align*}
    \mathcal{T}_{H \leadsto \beta}^{\epsilon,\gamma}F \colon \alpha^{\perp} &\rightarrow \gamma^{\perp} \\
    x &\mapsto \rho_{\beta}^{(\gamma)} \circ F(x + \epsilon) = \begin{cases}
    F(x + \epsilon) &\text{if } F(x+\epsilon) \in \gamma^{\perp} \\
    F(x + \epsilon) + \beta &\text{otherwise}
    \end{cases}~.
\end{align*}
\end{definition}

Note that, for a given $F\colon \ftwo^n \rightarrow \ftwo^n$, we can convert every function $\mathcal{T}_{H \leadsto \beta}^{\epsilon,\gamma}F\colon \alpha^{\perp} \rightarrow \gamma^{\perp}$ to a function from $\ftwo^{n-1}$ to $\ftwo^{n-1}$ by applying linear bijections in the input and in the output. Moreover, if we are only interested in $\mathcal{T}_{H \leadsto \beta}^{\epsilon,\gamma}F$ up to affine-equivalence, the particular choice of $\epsilon$ and $\gamma$ does not matter, as we show in the following proposition.

\begin{proposition}
  \label{prop:epsilon-doesnt-matter}
 Let $n \in \mathbb{N}, n \geq 2$. For a fixed choice of $(H,\beta)$ with $\beta \in \ftwo^n$ being a non-zero element and  $H \subseteq \ftwo^n$ being a hyperplane of dimension $n-1$, all trims of $F$ along $(H, \beta)$ with respect to some $\epsilon,\gamma$ are affine-equivalent.
\end{proposition}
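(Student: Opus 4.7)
The plan is to vary $\epsilon$ and $\gamma$ independently and show that in each case the resulting trims differ by an affine bijection, either pre- or post-composed. Fix $(H,\beta)$ throughout, and let $\alpha$ be such that $H \in \{\alpha^\perp, \overline{\alpha^\perp}\}$.

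First I would handle the dependence on $\epsilon$. If $H = \alpha^{\perp}$, then $\epsilon = 0$ is forced, so there is nothing to check. If $H = \overline{\alpha^\perp}$, then any two valid choices $\epsilon, \epsilon'$ both lie in $\overline{\alpha^\perp}$, so their difference $\delta \coloneqq \epsilon + \epsilon'$ lies in $\alpha^\perp$. Writing out the definition, one gets
\begin{equation*}
\mathcal{T}_{H \leadsto \beta}^{\epsilon,\gamma}F(x) = \rho_{\beta}^{(\gamma)}(F(x+\epsilon)) = \rho_{\beta}^{(\gamma)}(F((x+\delta)+\epsilon')) = \mathcal{T}_{H \leadsto \beta}^{\epsilon',\gamma}F(x+\delta)\;,
\end{equation*}
so the two trims differ by precomposition with the translation $x \mapsto x + \delta$ on $\alpha^\perp$, which is an affine bijection.

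Next I would handle the dependence on $\gamma$. Given $\gamma, \gamma' \in \ftwo^n \setminus \beta^\perp$ (so that $\langle \beta, \gamma \rangle = \langle \beta, \gamma' \rangle = 1$), I introduce the candidate map
\begin{equation*}
\psi \colon \gamma'^{\perp} \rightarrow \gamma^{\perp}, \quad y \mapsto y + \beta\langle \gamma, y \rangle\;.
\end{equation*}
I would verify three things: (i) $\psi$ takes values in $\gamma^\perp$, since $\langle \gamma, \psi(y)\rangle = \langle \gamma,y\rangle(1 + \langle \gamma, \beta\rangle) = 0$; (ii) $\psi$ is linear; (iii) $\psi$ is injective (hence, by dimension, bijective) because $\psi(y) = 0$ forces $y \in \{0,\beta\} \cap \gamma'^{\perp} = \{0\}$ given that $\langle \beta, \gamma' \rangle = 1$. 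A direct computation using $\langle \gamma, \beta \rangle = 1$ then yields $\psi \circ \rho_{\beta}^{(\gamma')} = \rho_{\beta}^{(\gamma)}$, from which
\begin{equation*}
\mathcal{T}_{H \leadsto \beta}^{\epsilon,\gamma}F = \psi \circ \mathcal{T}_{H \leadsto \beta}^{\epsilon,\gamma'}F\;,
\end{equation*}
so the two trims differ by postcomposition with the linear bijection $\psi$.

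Combining the two steps shows that $\mathcal{T}_{H \leadsto \beta}^{\epsilon,\gamma}F$ and $\mathcal{T}_{H \leadsto \beta}^{\epsilon',\gamma'}F$ are affine-equivalent (in fact, via a translation on the input and a linear map on the output), which is the claim. The only mildly technical step is the identity $\psi \circ \rho_{\beta}^{(\gamma')} = \rho_{\beta}^{(\gamma)}$; the main obstacle is just to keep track of the condition $\langle \gamma, \beta \rangle = 1$ used there, as well as the fact that any two valid offsets $\epsilon, \epsilon'$ differ by an element of $\alpha^\perp$ rather than of $\ftwo^n$.
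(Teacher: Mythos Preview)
Your proof is correct and follows essentially the same approach as the paper: you vary $\epsilon$ and $\gamma$ separately, handling the first by an input translation on $\alpha^\perp$ and the second by a linear output bijection between $\gamma^\perp$ and $\gamma'^\perp$. The only cosmetic difference is that the paper defines its linear isomorphism $Q$ in the direction $\gamma^\perp \to \gamma'^\perp$ and computes $Q \circ \mathcal{T}_{H \leadsto \beta}^{\epsilon,\gamma}F$ directly, whereas you go the other way and factor through the identity $\psi \circ \rho_\beta^{(\gamma')} = \rho_\beta^{(\gamma)}$; both yield the same conclusion.
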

\begin{proof}
We first consider the case of a fixed $\gamma \in \ftwo^n\setminus \beta^{\perp}$ and  varying $\epsilon$.
  If $H$ is a linear hyperplane, i.e., $H = \alpha^{\perp}$ for some non-zero $\alpha \in \ftwo^n$, there is only one valid choice of $\epsilon$, i.e., $\epsilon = 0$. Let us therefore consider the case that $H = \overline{\alpha^\perp}$. Let $\epsilon, \epsilon' \in \ftwo^n \setminus \alpha^{\perp}$ and let $\epsilon_{\alpha} \in \alpha^{\perp}$ be such that $\epsilon' = \epsilon + \epsilon_{\alpha}$.  We have
  \[ \mathcal{T}_{H \leadsto \beta}^{\epsilon',\gamma}F(x) = F((x + \epsilon_{\alpha}) + \epsilon) + \beta \langle \gamma, F((x + \epsilon_{\alpha}) + \epsilon) \rangle = \mathcal{T}_{H \leadsto \beta}^{\epsilon,\gamma}F(x+ \epsilon_{\alpha})\;,\]
  so $\mathcal{T}_{H \leadsto \beta}^{\epsilon',\gamma}F$ is affine-equivalent to $\mathcal{T}_{H \leadsto \beta}^{\epsilon,\gamma}F$.
  
  Let us now consider the case of a fixed $\epsilon$ and varying $\gamma \in \ftwo^n \setminus \beta^{\perp}$. Let $\gamma, \gamma' \in \ftwo^n \setminus \beta^{\perp}$ and let $Q$ be the linear isomorphism $Q \colon \gamma^{\perp} \rightarrow {\gamma'}^{\perp}, x \mapsto x + \beta \langle \gamma',x\rangle$. We then have
  \begin{align*}
      Q \circ \mathcal{T}_{H \leadsto \beta}^{\epsilon,\gamma}F(x) &= F(x+ \epsilon) + \beta \langle \gamma,F(x + \epsilon)\rangle + \beta \big\langle \gamma',F(x+\epsilon) + \beta \langle \gamma,F(x+\epsilon) \rangle \big\rangle \\
      &= \begin{cases}
        F(x+\epsilon) + \beta \langle \gamma',F(x+\epsilon) \rangle &\text{if } \langle \gamma,F(x+\epsilon) \rangle = 0 \\
        F(x+\epsilon) + \beta \langle \gamma',F(x+\epsilon) \rangle + \beta + \beta\langle \gamma',\beta\rangle &\text{if } \langle \gamma,F(x+\epsilon) \rangle = 1 
      \end{cases} \\
      &= F(x+\epsilon) + \beta \langle \gamma',F(x+\epsilon) \rangle = \mathcal{T}_{H \leadsto \beta}^{\epsilon,\gamma'}F(x)\;,
  \end{align*}
  so $\mathcal{T}_{H \leadsto \beta}^{\epsilon,\gamma'}F$ is linear-equivalent to $\mathcal{T}_{H \leadsto \beta}^{\epsilon,\gamma}F$.
\end{proof}

Because of the above proposition, we say that the EA-equivalence class of $\mathcal{T}^{\epsilon,\gamma}_{H \leadsto \beta}F$ is \emph{the trim of $F$ along $(H,\beta)$}, denoted $\mathcal{T}_{H \leadsto \beta}F$. 

\begin{remark}
To the best of our knowledge, the effect on the differential uniformity and on the linearity of restricting a vectorial Boolean function to an affine subspace was first studied in~\cite{DBLP:conf/fse/Nyberg94}. Moreover, the paper~\cite{DBLP:conf/fse/Nyberg94} also derived an upper bound and a lower bound on the differential uniformity of a vectorial Boolean function when composing it with an affine surjection from the output. There are some works that study subfunctions of APN functions obtained by discarding one output coordinate, e.g.,~\cite{Gorodilova+2016+193+202,DBLP:journals/ccds/Idrisova19}. We remark that the notion of a trim covers the case of restricting the input of $F$ to an affine hyperplane and then discarding one output coordinate of $F$, but it is more general than that. Indeed, if $F \colon \ftwo^n \rightarrow \ftwo^n$ is represented by its coordinate functions $F = (f_1,\dots,f_n)$ with $f_i \colon \ftwo^n \rightarrow \ftwo, i \in \{1,\dots,n\}$, the function $F' \colon \ftwo^{n} \rightarrow \ftwo^{n-1}, F' = (f_1,\dots,f_{j-1},f_{j+1},\dots,f_n)$ that is obtained by discarding the $j$-th coordinate of $F$ is linear-equivalent to the function $\rho_{e_j}^{(e_j)} \circ F \colon \ftwo^n \rightarrow e_j^{\perp}$, where $e_j = (0,\dots,0,1,0,\dots,0)$ is the $j$-th unit vector in $\ftwo^n$.
\end{remark}

The following proposition  states that every restriction of $F$ in dimension $n-1$ is EA-equivalent to a trim of $F$.

\begin{proposition}
\label{prop:reduction_implies_trim}
Let $n \in \mathbb{N}, n \geq 2$ and let $F \colon \ftwo^n \rightarrow \ftwo^n$ and $G \colon \ftwo^{n-1} \rightarrow \ftwo^{n-1}$ be given with $G \prec F$. Then, there exists a non-zero element $\beta \in \ftwo^n$ and an affine hyperplane $H \subseteq \ftwo^n$ of dimension $n-1$ such that $G$ is EA-equivalent to an (every) element in $\mathcal{T}_{H \leadsto \beta}F$.
\end{proposition}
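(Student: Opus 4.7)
The plan is to read off $H$ and $\beta$ directly from the affine maps witnessing $G \prec F$, and then to exhibit $G$ as affine-equivalent to a specific trim. Write $\phi = \phi_L + c$ and $\varphi = \varphi_L + d$, with $\phi_L \in \mathcal{L}(\ftwo^{n-1},\ftwo^n)$ injective and $\varphi_L \in \mathcal{L}(\ftwo^n, \ftwo^{n-1})$ surjective. I would set $H := \phi(\ftwo^{n-1})$, which is an affine hyperplane of dimension $n-1$ with linear part $\alpha^\perp := \phi_L(\ftwo^{n-1})$, and let $\beta$ be the unique non-zero element spanning $\ker \varphi_L$.

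Two elementary observations drive the rest of the argument. First, because $\rho_\beta^{(\gamma)}(x) + x \in \{0,\beta\} = \ker \varphi_L$, the projection is invisible after applying $\varphi_L$, i.e.\ $\varphi_L \circ \rho_\beta^{(\gamma)} = \varphi_L$. Second, for any $\gamma \in \ftwo^n \setminus \beta^\perp$ the condition $\langle \gamma, \beta \rangle = 1$ forces $\gamma^\perp \cap \ker \varphi_L = \{0\}$, so by dimension counting $\varphi_L|_{\gamma^\perp} \colon \gamma^\perp \to \ftwo^{n-1}$ is a linear bijection.

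Fix any $\epsilon \in H$ (this equals $0$ when $H$ is linear, and lies outside $\alpha^\perp$ otherwise, matching the trim definition) and any $\gamma \in \ftwo^n \setminus \beta^\perp$. Since $c, \epsilon$ are in the same coset of $\alpha^\perp$, the map $\psi \colon \ftwo^{n-1} \to \alpha^\perp, \ y \mapsto \phi(y) + \epsilon$ is an affine bijection, and $B \colon \gamma^\perp \to \ftwo^{n-1}, \ z \mapsto \varphi_L(z) + d$ is an affine bijection by the second observation. Substituting and using the first observation, one obtains
\[ B \circ \mathcal{T}_{H \leadsto \beta}^{\epsilon,\gamma} F \circ \psi(y) = \varphi_L\bigl(\rho_\beta^{(\gamma)}(F(\phi(y)))\bigr) + d = \varphi_L(F(\phi(y))) + d = G(y), \]
so $G$ is affine-equivalent to the particular trim $\mathcal{T}_{H \leadsto \beta}^{\epsilon,\gamma} F$. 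The upgrade from ``a trim'' to ``every element of $\mathcal{T}_{H \leadsto \beta} F$'' is then immediate from Proposition~\ref{prop:epsilon-doesnt-matter}.

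The main difficulty, such as it is, is pure bookkeeping of affine offsets: one must verify that $\psi$ really takes values in the \emph{linear} hyperplane $\alpha^\perp$ (the check $c + \epsilon \in \alpha^\perp$, which follows from $\epsilon \in H$) and that $B$ is well-defined as a bijection (controlled by the condition on $\gamma$). There is no deeper obstacle; the proposition is essentially a matter of unwinding the definitions of restriction and trim, together with the non-uniqueness result of Proposition~\ref{prop:epsilon-doesnt-matter}.
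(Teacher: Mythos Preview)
Your argument is correct, and in fact cleaner than the paper's. You and the paper identify the same hyperplane $H = \phi(\ftwo^{n-1})$ and, as it turns out, the same $\beta$: the paper constructs it as $P^{-1}\bigl(\begin{smallmatrix}A\beta\\1\end{smallmatrix}\bigr)$ via a permutation matrix $P$ and an invertible submatrix $A^{-1}$ extracted from the matrix of $\tilde\varphi$, but a one-line check shows this vector generates $\ker\tilde\varphi$, which is exactly your choice. Where the two proofs diverge is in the mechanism for showing the trim is affine-equivalent to $G$. The paper does an explicit matrix computation: it permutes columns of $M$, inverts the $(n-1)\times(n-1)$ block, and manipulates the resulting identity until the projection $\rho_{\beta'}^{(\gamma')}$ appears on the right-hand side. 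You instead isolate the two structural facts that make everything work, namely $\varphi_L\circ\rho_\beta^{(\gamma)} = \varphi_L$ (because the projection only perturbs by an element of $\ker\varphi_L$) and the bijectivity of $\varphi_L|_{\gamma^\perp}$, and the result falls out in one line. Your approach is coordinate-free and makes transparent why $\beta$ must be the kernel generator; the paper's approach is more hands-on but obscures this. One minor remark: your sentence ``Fix any $\epsilon \in H$ (this equals $0$ when $H$ is linear\ldots)'' reads slightly awkwardly, since when $H$ is linear you are not fixing \emph{any} $\epsilon\in H$ but specifically $\epsilon=0$; it would be cleaner to say ``choose $\epsilon$ as in the trim definition'' and then observe $\epsilon\in H$ in both cases, which is what you need for $c+\epsilon\in\alpha^\perp$.
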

\begin{proof}
By definition, there exists an affine injective mapping $\phi \colon \ftwo^{n-1} \rightarrow \ftwo^{n}$ and an affine surjection $\varphi \colon \ftwo^n \rightarrow \ftwo^{n-1}$ such that $G = \varphi \circ F \circ \phi$. Let $\tilde{\phi} \colon \ftwo^{n-1} \rightarrow \img{\tilde{\phi}}$ denote the linear part of $\phi$, such that $\phi = \tilde{\phi} + \epsilon$ for $\epsilon \in \ftwo^{n}$ with $\epsilon = 0$ if $\phi$ is linear and $\epsilon \notin \img{\tilde{\phi}}$ otherwise. Note that $\tilde{\phi}$ is a bijection from $\ftwo^{n-1}$ to $\img{\tilde{\phi}}$, where $\img{\tilde{\phi}} = \alpha^{\bot}$ for a non-zero element $\alpha \in \ftwo^n$. Further, let $\tilde{\varphi}$ denote the linear part of $\varphi$ such that $\varphi = \tilde{\varphi} + b$ with $b \in \ftwo^{n-1}$. We therefore have, for all $x \in \alpha^{\bot}$, 
\begin{align*} G \circ {\tilde{\phi}}^{-1} (x) = (\tilde{\varphi} \circ F(x + \epsilon)) + b\;.\end{align*}
Let $M$ be the $(n-1) \times n$ matrix over $\ftwo$ such that $\tilde{\varphi}(x) = Mx$ for all $x \in \ftwo^n$. Since $\tilde{\varphi}$ is surjective, the matrix $M$ has rank $n-1$, so there exist $n-1$ linearly independent columns in $M$. Let $P$ be an $n \times n$ permutation matrix such that $P^{-1}$ permutes those linearly independent columns to the left side, i.e, we have
\begin{align*} MP^{-1} = \left(\begin{array}{cc} A^{-1} & \beta \end{array}\right)\;, \end{align*}
where $A$ is an invertible $(n-1) \times (n-1)$ matrix and $\beta$ is a column vector in $\ftwo^{n-1}$. We then have $A M P^{-1} = \left(\begin{array}{cc} I & A\beta \end{array}\right)$, where $I$ denotes the $(n-1) \times (n-1)$ identity matrix. Thus, for all $x \in \alpha^{\bot}$, we have
\begin{align*} \left( \begin{array}{c}A(G \circ {\tilde{\phi}}^{-1} (x)) \\ 0 \end{array} \right) &= \left( \begin{array}{cc} I & A\beta \\ 0 & 0 \end{array}\right) PF(x+ \epsilon) + \left(\begin{array}{c}Ab \\ 0 \end{array} \right) \\
&= P F(x + \epsilon) + \left(\begin{array}{c}A\beta \\ 1 \end{array} \right) \langle e_n,PF(x + \epsilon) \rangle + \left(\begin{array}{c}Ab \\ 0 \end{array} \right)\\
&= P \hspace{-.2em}\left( \hspace{-.2em} F(x + \epsilon) + P^{-1}\left(\begin{array}{c}A\beta \\ 1 \end{array} \right) \langle P^{\top}e_n,F(x + \epsilon) \rangle + P^{-1}\left(\begin{array}{c}A\beta  \\ 0 \end{array} \right)\hspace{-.2em} \right),\end{align*}
and finally 
\begin{align*} P^{-1} \left( \begin{array}{c}A(G \circ {\tilde{\phi}}^{-1} (x)) \\ 0 \end{array} \right) + P^{-1}\left(\begin{array}{c}A\beta  \\ 0 \end{array} \right) &=   F(x + \epsilon) + P^{-1}\left(\begin{array}{c}A\beta \\ 1 \end{array} \right) \langle P^{\top}e_n,F(x + \epsilon) \rangle .\end{align*}
\end{proof}

This proposition allows to reduce the number of restrictions we need to consider for a given function $F \colon \ftwo^n \rightarrow \ftwo^n$ to  $2 \cdot (2^n-1)^2$, i.e., we only need to consider all the trims $\mathcal{T}_{H \leadsto \beta}F$ of $F$, where $(H,\beta)$ takes all possible values.
In the following, we establish the fact that for EA-equivalent functions $F,G \colon \ftwo^n \rightarrow \ftwo^n$, the multiset of all possible trims of $F$ is the same as the multiset of all possible trims of $G$.

\begin{proposition}
\label{prop:ea-trimmings}
Let $n \in \mathbb{N},n\geq 2$ and let $F,G \colon \ftwo^n \rightarrow \ftwo^n$ be EA-equivalent via $G = B \circ F \circ (A + a) + b + C$ with $A,B \in \GL(n,\ftwo)$, $a,b \in \ftwo^n$ and $C$ being an affine function in $\ftwo^n$. Then, for each hyperplane $H \subseteq \ftwo^n$ and each $\beta \in \ftwo^n \setminus \{0 \}$, we have that $\mathcal{T}_{H \leadsto \beta}G$ is EA-equivalent\footnote{More precisely, since $\mathcal{T}_{H \leadsto \beta}G$ and $\mathcal{T}_{H' \leadsto \beta'}F$ are EA-equivalence classes, we have to say that that any representative in $\mathcal{T}_{H \leadsto \beta}G$ is EA-equivalent to any representative in $\mathcal{T}_{H' \leadsto \beta'}F$.} to $\mathcal{T}_{H' \leadsto \beta'}F$, where $H' =A(H)+a $ and $\beta' = B^{-1}(\beta)$.
\end{proposition}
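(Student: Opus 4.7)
The plan is to pick compatible representatives of the two EA-classes $\mathcal{T}_{H\leadsto\beta}G$ and $\mathcal{T}_{H'\leadsto\beta'}F$ (using the freedom granted by Proposition~\ref{prop:epsilon-doesnt-matter}) and to exhibit an explicit EA-equivalence between them by direct substitution. The coupled choice I would make on the output side is $\gamma' := B^{\top}\gamma$, which guarantees $\langle \gamma',\beta'\rangle = \langle \gamma, BB^{-1}(\beta)\rangle = \langle\gamma,\beta\rangle = 1$ and, more importantly, makes $B$ restrict to a linear bijection $\tilde{B}\colon{\gamma'}^{\perp}\to\gamma^{\perp}$. On the input side, pick any $\epsilon$ adapted to $H$, set $\delta := A(\epsilon)+a$, note that $H'= A(\alpha^{\perp}+\epsilon)+a = {\alpha'}^{\perp}+\delta$ where $\alpha' = (A^{-1})^{\top}\alpha$, and take $\epsilon'=\delta$ if $H'$ is affine and $\epsilon'=0$ if $H'$ is linear. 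In both sub-cases the assignment $x\mapsto A(x)+(\delta-\epsilon')$ defines an affine bijection $\tilde{A}\colon\alpha^{\perp}\to{\alpha'}^{\perp}$ satisfying $(A+a)(x+\epsilon)=\tilde{A}(x)+\epsilon'$.

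The next step is algebraic substitution. Writing $g(x):=\mathcal{T}_{H\leadsto\beta}^{\epsilon,\gamma}G(x)=\rho_{\beta}^{(\gamma)}(G(x+\epsilon))$ and plugging in $G = B\circ F\circ(A+a)+b+C$, together with the identity $(A+a)(x+\epsilon)=\tilde{A}(x)+\epsilon'$ just established, one obtains an expression involving $F(\tilde{A}(x)+\epsilon')$. I would then re-express this value through its own trim: by definition of $\rho_{\beta'}^{(\gamma')}$ one has $F(\tilde{A}(x)+\epsilon') = f(\tilde{A}(x)) + \beta'\cdot s(x)$, where $f:=\mathcal{T}_{H'\leadsto\beta'}^{\epsilon',\gamma'}F$ and $s(x):=\langle\gamma',F(\tilde{A}(x)+\epsilon')\rangle$ is a Boolean function of $x$. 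Applying $B$ turns the correction term into $\beta\cdot s(x)$ because $B(\beta')=\beta$ by hypothesis.

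To finish, I would exploit the $\ftwo$-linearity of $\rho_{\beta}^{(\gamma)}$ and the two collapse identities $\rho_{\beta}^{(\gamma)}(\beta)=\beta+\beta\langle\gamma,\beta\rangle=0$ and $\rho_{\beta}^{(\gamma)}|_{\gamma^{\perp}}=\mathrm{id}$. The first kills the $\beta\cdot s(x)$ term, and the second lets $Bf(\tilde{A}(x))=\tilde{B}(f(\tilde{A}(x)))\in\gamma^{\perp}$ pass through unchanged. What remains is
\[
    g(x) \;=\; \tilde{B}\bigl(f(\tilde{A}(x))\bigr) \;+\; \rho_{\beta}^{(\gamma)}\bigl(b+C(x+\epsilon)\bigr),
\]
and the final summand is manifestly an affine function $\tilde{C}\colon\alpha^{\perp}\to\gamma^{\perp}$. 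This exhibits $g=\tilde{B}\circ f\circ \tilde{A}+\tilde{C}$ with $\tilde{A},\tilde{B}$ affine bijections, which is precisely an EA-equivalence between the chosen representatives of $\mathcal{T}_{H\leadsto\beta}G$ and $\mathcal{T}_{H'\leadsto\beta'}F$.

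The genuinely new ideas are limited to the coupling $\gamma'=B^{\top}\gamma$ (which is what makes $B$ restrict correctly between the ortho-complements) and the observation that the output correction $\beta'\cdot s(x)$ transported through $B$ lands exactly in the kernel of $\rho_{\beta}^{(\gamma)}$; after that, the computation is mechanical. The main obstacle I anticipate is the bookkeeping of case distinctions for whether $H$, respectively $H'$, is linear or affine (and the corresponding correct choices of $\epsilon$ and $\epsilon'$), as well as keeping the duality between $(\beta,\gamma)$ and $(\beta',\gamma')$ straight throughout the substitution.
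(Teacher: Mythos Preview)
Your argument is correct. The key technical devices you use---the coupling $\gamma'=B^{\top}\gamma$ so that $B$ restricts to a bijection ${\gamma'}^{\perp}\to\gamma^{\perp}$, the identity $\rho_{\beta}^{(\gamma)}(\beta)=0$ that annihilates the correction term, and the case split on whether the shifted hyperplane $H'$ is linear or affine to pick $\epsilon'$---are exactly the ones the paper relies on as well.

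The only organizational difference is that the paper decomposes the EA-equivalence $G=B\circ F\circ(A+a)+b+C$ into three elementary steps (input affine composition, output affine composition, additive affine term) and verifies the claim separately for each, whereas you treat all three at once in a single substitution. Your unified computation is a little slicker once the coupled representatives are chosen, while the paper's modular split keeps each piece shorter and makes the dependence of $(H',\beta')$ on the individual data $A,a,B$ more transparent. Neither approach introduces any idea the other lacks.
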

\begin{proof}
Let us fix a non-zero element $\beta \in \ftwo^n$ and an $n-1$-dimensional hyperplane $H \subseteq \ftwo^{n}$ such that $H = \alpha^{\perp}$ or $H = \overline{\alpha^{\perp}}$ for a non-zero $\alpha \in \ftwo^n$. Further, let $\gamma \in \ftwo^n \setminus \beta^{\perp}$ and $\epsilon \in \ftwo^n$ with $\epsilon = 0$ if $H = \alpha^{\perp}$ and $\epsilon \notin \alpha^{\perp}$ if $H = \overline{\alpha^{\perp}}$. The proof is divided in three parts.
\begin{itemize}
\item Let first $ G = F \circ (A + a)$, where $A \in \GL(n,\ftwo)$ and $a \in \ftwo^n$. Let $\alpha' \in \ftwo^{n}$ be such that $A(\alpha^{\perp}) = {\alpha'}^{\perp}$. First, let us consider the case of $H = \alpha^{\perp}$. For any $x \in \alpha^{\perp}$, we then have
\begin{align*}
       \mathcal{T}_{H \leadsto \beta}^{0,\gamma}G(x) &= G (x) + \beta \langle \gamma,G (x )\rangle = F  ( A (x) + a) + \beta \langle \gamma,F ( A (x) + a)\rangle \\
      &= \begin{cases}
      \mathcal{T}_{A(H)+a \leadsto \beta}^{0,\gamma}F(A(x) + a) &\text{ if } a \in A(H) \\
      \mathcal{T}_{A(H)+a \leadsto \beta}^{a ,\gamma}F(A(x)) &\text{ if } a \notin A(H)
      \end{cases}\;.
\end{align*}

Let us now consider the case of $H = \overline{\alpha^{\perp}}$. Then, $H = \alpha^{\perp} + \epsilon$ for $\epsilon \notin \alpha^{\perp}$. For any $x \in \alpha^{\perp}$, we then have
\begin{align*}
       \mathcal{T}_{H \leadsto \beta}^{\epsilon,\gamma}G(x) &= G (x + \epsilon) + \beta \langle \gamma,G (x + \epsilon)\rangle \\
       &= F  ( A (x + \epsilon) + a) + \beta \langle \gamma,F ( A (x + \epsilon) + a)\rangle \\
      &= F  ( A (x) + a + A(\epsilon)) + \beta \langle \gamma,F ( A (x) + a + A(\epsilon))\rangle \\
      &= \begin{cases}
      \mathcal{T}_{A(H)+a \leadsto \beta}^{0,\gamma}F(A(x) + (a+ A(\epsilon))) &\text{ if } a+A(\epsilon) \in {\alpha'}^{\perp} \\
      \mathcal{T}_{A(H)+a \leadsto \beta}^{a + A(\epsilon),\gamma}F(A(x)) &\text{ if } a+A(\epsilon) \notin {\alpha'}^{\perp}
      \end{cases}\;.
\end{align*}

\item Let now $ G = B \circ F + b$, where $B \in \GL(n,\ftwo), b \in \ftwo^n$. Let $\gamma' \coloneqq B^{\top}(\gamma)$. For any $x \in \alpha^{\perp}$, we then have
\begin{align*}
    \mathcal{T}_{H \leadsto \beta}^{\epsilon,\gamma}G(x) &= G (x + \epsilon) + \beta \langle \gamma,G (x + \epsilon)\rangle \\
    &= B(F (x + \epsilon)) + \beta \langle \gamma,B(F( x + \epsilon))\rangle + (b + \beta \langle \gamma,b \rangle)  \\
    &= B\big(F (x + \epsilon) + B^{-1}(\beta) \langle \gamma',F( x + \epsilon)\rangle\big) + (b + \beta \langle \gamma,b \rangle) \\
    &= B \circ \mathcal{T}_{H \leadsto \beta'}^{\epsilon,\gamma'}F(x) + (b + \beta \langle \gamma,b \rangle)\;.
\end{align*}

\item Finally, let $G = F + C$ with $C$ being an affine function in $\ftwo^n$. For any $x \in \alpha^{\perp}$, we then have
\begin{align*}
    \mathcal{T}_{H \leadsto \beta}^{\epsilon,\gamma}G(x) &= G (x + \epsilon) + \beta \langle \gamma,G (x + \epsilon)\rangle \\ &= F (x + \epsilon) + \beta \langle \gamma,F (x + \epsilon)\rangle + C (x + \epsilon) + \beta \langle \gamma,C (x + \epsilon)\rangle \\
    &= \mathcal{T}_{H \leadsto \beta}^{\epsilon,\gamma}F(x) + \left( C (x + \epsilon) + \beta \langle \gamma,C (x + \epsilon)\rangle \right)\;.
\end{align*}
\end{itemize}
\end{proof}

Proposition~\ref{prop:ea-trimmings} motivates us to define the notion of the \emph{trim spectrum} of a function $F \colon \ftwo^n \rightarrow \ftwo^n$, which is an EA-invariant. In the following, let $\mathcal{H}_n$ denote the set of all $n-1$-dimensional hyperplanes of $\ftwo^n$, i.e., $\mathcal{H}_n \coloneqq \{\alpha^{\perp} \mid \alpha \in \ftwo^n \setminus \{0\}\} \cup \{\overline{\alpha^{\perp}} \mid \alpha \in \ftwo^n \setminus \{0\}\}$.

\begin{definition}[Trim Spectrum]
   Let $n \in \mathbb{N},n\geq 2$. The \emph{trim spectrum} of a function $F \colon \ftwo^n \rightarrow \ftwo^n$ is the multiset of all trims of $F$ along $(H,\beta)$, where $(H,\beta)$ takes all $2 \cdot (2^n-1)^2$ possibilities, i.e., the multiset $\{ \mathcal{T}_{H \leadsto \beta}F \mid H \in \mathcal{H}_n, \beta \in \ftwo^n \setminus \{0\} \}$. 
\end{definition}

 We recall that for an $n-1$-dimensional hyperplane $H \subseteq \ftwo^n$ and a non-zero element $\beta \in \ftwo^n$, the trim $\mathcal{T}_{H \leadsto \beta}F$ refers to an EA-equivalence class, and not to a particular vectorial Boolean function.
 
 \begin{corollary}
 The trim spectrum of a vectorial Boolean function is an EA-invariant. In other words, for $n \in \mathbb{N},n\geq 2$, if $F,G \colon \ftwo^n \rightarrow \ftwo^n$ are EA-equivalent, the multisets $\{ \mathcal{T}_{H \leadsto \beta}F \mid H \in \mathcal{H}_n, \beta \in \ftwo^n \setminus \{0\} \}$ and $\{ \mathcal{T}_{H \leadsto \beta}G \mid H \in \mathcal{H}_n, \beta \in \ftwo^n \setminus \{0\} \}$ consist of the same EA-equivalence classes with the same multiplicities.
 \end{corollary}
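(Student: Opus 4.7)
The plan is to derive the corollary as an immediate consequence of Proposition~\ref{prop:ea-trimmings} by exhibiting a bijection on the index set $\mathcal{H}_n \times (\ftwo^n \setminus \{0\})$ that identifies the trims of $G$ with the trims of $F$.

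Concretely, suppose $G = B \circ F \circ (A + a) + b + C$ with $A, B \in \GL(n,\ftwo)$, $a, b \in \ftwo^n$, and $C$ affine. I would define the map
\begin{equation*}
\sigma \colon \mathcal{H}_n \times (\ftwo^n \setminus \{0\}) \to \mathcal{H}_n \times (\ftwo^n \setminus \{0\}), \quad (H, \beta) \mapsto (A(H) + a,\, B^{-1}(\beta))\;.
\end{equation*}
The first step is to check that $\sigma$ is well-defined and bijective. For the second coordinate, this is trivial since $B^{-1}$ is a linear bijection of $\ftwo^n$ that preserves the zero vector. For the first coordinate, one observes that an affine bijection of $\ftwo^n$ sends affine hyperplanes to affine hyperplanes, and the map $H \mapsto A(H) + a$ is invertible with inverse $H' \mapsto A^{-1}(H' + a) = A^{-1}(H') + A^{-1}(a)$, so it permutes $\mathcal{H}_n$.

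Next, by Proposition~\ref{prop:ea-trimmings}, for every $(H, \beta)$ we have the equality of EA-equivalence classes $\mathcal{T}_{H \leadsto \beta}G = \mathcal{T}_{\sigma(H,\beta)}F$. Therefore
\begin{equation*}
\{ \mathcal{T}_{H \leadsto \beta}G \mid (H,\beta) \in \mathcal{H}_n \times (\ftwo^n \setminus \{0\}) \} = \{ \mathcal{T}_{\sigma(H,\beta)}F \mid (H,\beta) \in \mathcal{H}_n \times (\ftwo^n \setminus \{0\}) \}\;,
\end{equation*}
and the right-hand side equals $\{ \mathcal{T}_{H' \leadsto \beta'}F \mid (H', \beta') \in \mathcal{H}_n \times (\ftwo^n \setminus \{0\}) \}$ as multisets, because $\sigma$ is a bijection and the multiset operation is reindexing-invariant.

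There is no real obstacle here since the substantive content is already contained in Proposition~\ref{prop:ea-trimmings}; the only thing to watch is that the statement concerns multisets (not merely sets), so I would make sure to emphasize that the bijectivity of $\sigma$ preserves multiplicities when passing from indices to EA-equivalence classes. Writing out the short argument above is then sufficient to conclude.
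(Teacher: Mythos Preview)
Your proposal is correct and follows essentially the same approach as the paper's own proof: invoke Proposition~\ref{prop:ea-trimmings} and then observe that the maps $H \mapsto A(H)+a$ on $\mathcal{H}_n$ and $\beta \mapsto B^{-1}(\beta)$ on $\ftwo^n \setminus \{0\}$ are bijections, so the trim spectra agree as multisets. Your write-up is slightly more detailed (combining the two coordinate maps into a single bijection $\sigma$ and spelling out the multiset-preservation argument), but there is no substantive difference.
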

\begin{proof}
This follows directly from Proposition~\ref{prop:ea-trimmings} and from the fact that the two functions $\mathcal{H}_n \rightarrow \mathcal{H}_n, H \mapsto H' = A(H)+a$ and $\ftwo^n \setminus \{0\} \rightarrow \ftwo^n \setminus \{0\}, \beta \mapsto \beta' = B^{-1}(\beta)$ are permutations for $A,B \in \GL(n,\ftwo), a \in \ftwo^n$.
\end{proof}

\begin{remark}
\label{rem:simplified_trim}
We note that for a quadratic function $F \colon \ftwo^n \rightarrow \ftwo^n$, the trim spectrum can be simplified by only considering the \emph{linear} hyperplanes $H$. In fact, for $\alpha, \beta \in \ftwo^n \setminus \{0 \}$ and $\epsilon \in \overline{\alpha^{\bot}}, \gamma \in \overline{\beta^{\bot}}, x \in \alpha^{\bot}$, we have 
\begin{align*}\mathcal{T}_{\overline{\alpha^{\bot}} \leadsto \beta}^{\epsilon,\gamma}F(x) &= F(x+ \epsilon) + \beta \langle \gamma, F(x + \epsilon)\rangle  \\
&= A_{\epsilon}(x) + \beta \langle \gamma, A_{\epsilon}(x)\rangle + \mathcal{T}_{\alpha^{\bot} \leadsto \beta}^{0,\gamma}F(x)\;,\end{align*}
where $A_{\epsilon}$ is the affine mapping such that, for all $x \in \ftwo^n$, we have  $F(x) + F(x+\epsilon) = A_{\epsilon}(x)$. Thus, the two trims $\mathcal{T}_{\overline{\alpha^{\bot}} \leadsto \beta}^{\epsilon,\gamma}F$ and $\mathcal{T}_{\alpha^{\bot} \leadsto \beta}^{0,\gamma}F$ are EA-equivalent.
\end{remark}

\subsection{APN-Trims of APN Functions in Small Dimension}
We could ask the following question: Among the APN functions that were known before our work, how many have APN functions as a restriction in one dimension lower?

To answer this question for the quadratic APN functions in small dimension, we proceeded as follows.
For each quadratic APN function $F$ in dimension $3 \leq n \leq 7$ up to EA-equivalence, we checked whether the trim spectrum of $F$ contains an APN function. To illustrate this data, we plotted it in a \emph{trimming graph} defined as follows. Each EA-equivalence class of a quadratic APN function corresponds to a node, and each node is at a height corresponding to the value of $n$ ($n=3$ at the lowest level, and then it is incremented going up each level). Then, there is a vertex from a node $F$ at height $n$ down to a node $G$ at height $n-1$ if the function $G$ is in the trim spectrum of $F$. Functions that are neither the start nor the end of an edge are not represented. The result can be seen in the graph given in Figure~\ref{fig:trimming-graph-n7}. This graph is complete in the sense that all EA-equivalence classes of quadratic APN functions in dimension up to $n=7$ are known and we have taken into account all the relevant functions. More precisely, for $n \leq 4$, there is only the EA-equivalence class corresponding to the function $x \mapsto x^3$ over $\F_{2^n}$, for $n = 5$, there are the two EA-equivalence classes corresponding to the functions $x \mapsto x^3$ and $x \mapsto x^5$, for $n=6$, there are the 13 quadratic EA-equivalence classes listed in~\cite[Table 5]{DBLP:journals/amco/EdelP09}, and for $n=7$, there are 488 EA-equivalence classes of quadratic APN functions~\cite{DBLP:journals/iacr/KalginI20a}.

\begin{figure}[htb]
    \centering
    \includegraphics[width=.99\columnwidth]{}
    \caption{\label{fig:trimming-graph-n7}The trimming graph of all quadratic APN functions for $n \leq 7$. Functions (i.e., nodes) that are neither the start nor the end of an edge are not represented.}
\end{figure}

As we can see, the trimming graph given in Figure~\ref{fig:trimming-graph-n7} has a complex structure where the following properties stand out.
\begin{itemize}
    \item Two EA-equivalence classes of quadratic 6-bit APN functions are in the trim spectra of some 7-bit ones, but do not have any APN functions in their own trim spectra. Those functions correspond to Function no.\@ 1.2 and Function no.\@ 2.1 in~\cite[Table 5]{DBLP:journals/amco/EdelP09} and can be given in univariate representation as $x \mapsto x^3 + g^{11}x^6 + gx^9$ and $x \mapsto x^3 + gx^{24} + x^{10}$, respectively, where $g \in \F_{2^6}$ is an element with minimal polynomial $X^6 + X^4 + X^3 + X + 1$.
    \item Conversely, there is a 6-bit quadratic APN function that is not in the trim spectrum of any 7-bit one, but it has 5-bit APN functions in its trim spectrum. This is Function no.\@ 2.6 in~\cite[Table 5]{DBLP:journals/amco/EdelP09} (see also the representation we provide in Section~\ref{sec:gold}). Note that this observation was already reported in~\cite[Sec.\@ 3]{DBLP:journals/iacr/KalginI20a}.
    \item Most of the EA-equivalence classes of quadratic 6-bit APN functions are in the trim spectra of multiple 7-bit ones, and most of the quadratic 7-bit APN functions that have APN functions in their trim spectra have multiple different EA-equivalence classes of APN functions in their trim spectra.
    \item Several of the quadratic 7-bit APN functions have a unique EA-equivalence class of 6-bit APN functions in their trim spectrum.
    \item For $n=6$, one out of the 13 EA-inequivalent quadratic APN functions does not appear in the graph because it has no APN functions in its trim spectrum, and does not belong to the trim spectrum of any quadratic 7-bit APN function. That function is $x \mapsto x^3$. For $n=7$, 50 functions out of the 488 EA-inequivalent quadratic APN functions do not appear because they do not have any APN functions in their trim spectra.
\end{itemize}

Moreover, for each of the known quadratic APN functions in dimension $n=8$ up to EA-equivalence (except the new ones constructed in Section~\ref{sec:r-extendable}), we also checked whether it contains an APN function within its trim spectrum. The result is given in the graph depicted in Figure~\ref{fig:trimming-graph-n8}. Unlike the previous one, this graph is \emph{not} complete since quadratic APN functions in dimension eight are not classified yet, so we do not have a full list of all the quadratic 8-bit APN functions.

\begin{figure}[htb]
    \centering
    \includegraphics[width=.99\columnwidth]{}
    \caption{\label{fig:trimming-graph-n8}The trimming graph of all quadratic APN functions in dimension $n$ with $3 \leq n \leq 8$ that were known before our work. Functions (i.e., nodes) that are neither the start nor the end of an edge are not represented.}
\end{figure}

Looking at all of the 26,524 previously known quadratic 8-bit APN functions, we observe the following: 
\begin{itemize}
    \item Only 123 of those functions have an APN function within their trim spectrum.
    \item 15 have a unique EA-equivalence class of 7-bit APN functions in their trim spectrum that is also not in the trim spectrum of another of the 26,524 quadratic 8-bit APN functions. 
    \item Three of the EA-equivalence classes quadratic 7-bit APN functions are in the trim spectrum of quadratic 8-bit APN functions, but do not have any APN functions in their own trim spectrum (meaning that these functions appear in Figure~\ref{fig:trimming-graph-n8} but not in Figure~\ref{fig:trimming-graph-n7}). 
\end{itemize}

To the best of our knowledge, there are 60 known quadratic APN functions in dimension 9 up to EA-equivalence. They either correspond to a polynomial with coefficients in $\ftwo$~\cite{cryptoeprint:2019:1491}, to the (generalized) isotopic shift construction~\cite{DBLP:journals/tit/BudaghyanCCCV20,budaghyan2020generalized}, to the infinite families given in~\cite{DBLP:journals/ffa/BudaghyanCL09,budaghyan2009construction}, or to one of the 35 instances presented in~\cite{DBLP:journals/corr/abs-2009-07204}. None of those 60 instances contains an APN trim within their trim spectrum. 

We further checked for some quadratic APN instances in dimension $n=10$ that come from infinite families (i.e., the instances 10.1--10.2 and 10.5--10.17 from~\cite{apn-list-11}) and for the 5 sporadic instances presented in~\cite{DBLP:journals/corr/abs-2009-07204} whether they contain an APN trim within their trim spectrum. This is not the case for any of those functions.

\paragraph{On the 6-Bit APN Function Inequivalent to a Quadratic Function} From~\cite{DBLP:journals/ccds/Calderini20}, we know that the known 6-bit APN function CCZ-inequivalent to a quadratic function consist of 25 distinct EA-equivalence classes. For each of those EA-equivalence classes, we chose a representative and checked whether the trim spectrum contains APN functions. Indeed, this is the case for 21 out of the 25 classes. For one class, all the APN trims are quadratic, while for the 20 other classes, all the APN trims are cubic. 

\paragraph{On the Non-Quadratic APN Monomial Functions} We further checked for all of the non-quadratic APN monomial functions in dimension $n$ with $5 \leq n \leq 10$ whether the trim spectrum contains an APN function. This is not the case.

In~\cite{trims_code}, we provide the source code for computing the trim spectrum of a given vectorial Boolean function, as well as the source code for computing the trimming graphs as given in Figure~\ref{fig:trimming-graph-n7} and Figure~\ref{fig:trimming-graph-n8}.

\subsection{Recursive APN Functions}
As one can observe from the trimming graph depicted in Figure~\ref{fig:trimming-graph-n8}, there are quadratic APN functions in dimension $n=8$ that contain an APN function in dimension $n=7$ as a restriction, which again contains an APN restriction in dimension $n=6$, and so forth up to $n=2$. This property is indicated by a path from the top level to the bottom level of the trimming graph (note that the trimming graph only depicts the nodes for $n \geq 3$). We call such functions \emph{recursive} APN functions, formally defined in the following.

\begin{definition}
Let $n \in \mathbb{N}, n\geq 2$. An APN function $F_n \colon \ftwo^n \rightarrow \ftwo^n$ is called \emph{recursive} if, for each $i \in \{2,\dots,n-1\}$, there exists an APN function $F_i \colon \ftwo^i \rightarrow \ftwo^i$ such that $F_2 \prec F_3 \prec \dots \prec F_{n-1} \prec F_n$.
\end{definition}

In Appendix~\ref{app:recursive}, we give an example of a recursive APN function in dimension $n=8$. Motivated by the new APN extensions we construct in Section~\ref{sec:r-extendable} below, we raise the following conjecture.

\begin{conjecture}
\label{con:recursive}
There exists a recursive APN function in every dimension $n \in \mathbb{N},n\geq 2$.
\end{conjecture}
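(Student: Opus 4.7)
The plan is induction on $n$. For the base case $n=2$, one checks directly that an APN function on $\ftwo^2$ exists (for instance the reduction of $x \mapsto x^3$ on $\F_4$), and this vacuously serves as $F_2$. The inductive step asks: given a recursive APN function $F_{n-1}$ on $\ftwo^{n-1}$, produce an APN function $F_n$ on $\ftwo^n$ with $F_{n-1} \prec F_n$. The resulting chain $F_2 \prec \dots \prec F_{n-1} \prec F_n$ then automatically inherits recursiveness, so the only work is in extending by one dimension.

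By Proposition~\ref{prop:reduction_implies_trim}, any APN function extending $F_{n-1}$ is EA-equivalent to one of the form prescribed by Equation~(\ref{eq:general_form}): it is determined by a choice of Boolean functions $r_1, r_2 : \ftwo^{n-1} \to \ftwo$ and an auxiliary function $G' : \ftwo^{n-1} \to \ftwo^{n-1}$. To make the induction tractable I would strengthen the hypothesis to require $F_{n-1}$ to be \emph{quadratic and} recursive. The forthcoming results of Section~\ref{sec:r-extendable} then drastically prune the search: they force $G' = F_{n-1} + L$ for some linear $L$ and $r_2 = r_1 + \ell$ for a linear Boolean function $\ell$. The inductive step thus reduces to the combinatorial question: given a quadratic APN function $G$ on $\ftwo^{n-1}$, can one always find a triple $(r_1, L, \ell)$ so that the resulting $T$ is APN?

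The main obstacle is precisely the existence of this triple, and this is where the conjecture becomes genuinely open rather than a routine induction. Two approaches seem worth pursuing in parallel. First, a counting/probabilistic argument: estimate the number of triples $(r_1, L, \ell)$ for which the differential equations $T(x) + T(x+a) = b$ fail to have more than two solutions for every non-zero $a$, and show that this count must be strictly positive. A natural start is to expand the APN condition on $T$ into a system of linear and quadratic equations in the coordinates of $(r_1, L, \ell)$ parameterized by the known derivatives of $G$, and bound the codimension of the ``bad'' locus. Second, a structural construction: identify an infinite family of quadratic APN functions (Gold, Kasami, Taniguchi, or a suitable variant) together with a canonical extension procedure that preserves APN-ness. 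The difficulty is that the classical families yield APN functions only in dimensions constrained by subfield or parity conditions, so a naive transfer along the tower of subfields skips intermediate $n$ and fails to cover every dimension.

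In summary, the inductive skeleton and the reduction to triples $(r_1, L, \ell)$ make the problem concrete, but the heart of the matter is to show that for \emph{every} $n$ there is a quadratic APN function in dimension $n-1$ whose extension locus is non-empty. The strong empirical support assembled in Figures~\ref{fig:trimming-graph-n7} and~\ref{fig:trimming-graph-n8} and in Section~\ref{sec:r-extendable} suggests this should be true, but I do not see how to turn the empirical evidence into a uniform existence proof without a new structural insight, and I expect that to be the step where the argument breaks down without further ideas.
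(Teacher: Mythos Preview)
The statement you are attempting to prove is labelled \emph{Conjecture}~\ref{con:recursive} in the paper, and the paper does not prove it; on the contrary, it is restated verbatim as Open Problem~2 at the end. So there is no ``paper's own proof'' to compare your proposal against. Your write-up is consistent with this: you correctly isolate the inductive skeleton, reduce the extension step to finding a triple $(r_1,L,\ell)$ via the standard form of Proposition~\ref{prop:standard_form}, and then honestly flag that the existence of such a triple for every $n$ is precisely the unproven part. That is an accurate assessment of the state of affairs, not a proof.

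One small correction of emphasis: you invoke Proposition~\ref{prop:reduction_implies_trim} to justify the form of an extension, but that proposition goes in the other direction (it shows every restriction is EA-equivalent to a trim). The relevant reduction for the inductive step is Proposition~\ref{prop:standard_form}, which you do cite implicitly through Section~\ref{sec:r-extendable}. Also, strengthening the induction hypothesis to ``quadratic and recursive'' is natural, but be aware that the paper's own data already exhibits a quadratic 6-bit APN function (Function no.~2.6) that is recursive yet has \emph{no} quadratic APN extension in dimension~7; so even within the quadratic world the inductive step can fail for a particular $F_{n-1}$, and any argument would need the freedom to switch to a different representative at each level rather than extend a fixed chain.
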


\section{New APN Extensions in Dimension Eight}
\label{sec:r-extendable}
In this section, up to EA-equivalence, we construct new quadratic APN functions in dimension eight as extensions of quadratic 7-bit APN functions. Note that a search for quadratic APN functions $F \colon \ftwo^{n+1} \rightarrow \ftwo^{n+1}$ which contain a quadratic APN function $G \colon \ftwo^{n} \rightarrow \ftwo^{n}$ as a restriction was already conducted in~\cite{DBLP:journals/iacr/KalginI20a}. In this previous work, the search was based on necessary properties of the algebraic normal form of $F$ and it was quite similar to the QAM approach~\cite{DBLP:journals/dcc/YuWL14}. However, no results were reported for $n \geq 7$. 

The following proposition derives a simple form of such APN functions, which we utilize in our search.

\begin{proposition}
\label{prop:standard_form}
Let $F \colon \ftwo^{n+1} \rightarrow \ftwo^{n+1}$ be a quadratic function. Then, there exists a function $G\colon \ftwo^n \rightarrow \ftwo^n$ of algebraic degree at most 2, a Boolean function $r \colon \ftwo^n \rightarrow \ftwo^n$ of algebraic degree at most 2, and two linear functions $L \colon \ftwo^n \rightarrow \ftwo^n$, $\ell \colon \ftwo^n \rightarrow \ftwo$ such that $F$ is EA-equivalent to
\begin{eqnarray*}
T:\ftwo^n \times \ftwo &\to& \ftwo^n \times \ftwo \\
\left(\begin{array}{c}
     x  \\
     y 
\end{array}\right) &\mapsto& 
\left(\begin{array}{c}
     G(x)  \\
     r(x) 
\end{array}\right)+
\left(\begin{array}{c}
     L(x)  \\
     \ell(x) 
\end{array}\right)\cdot y\;.
\end{eqnarray*}
\end{proposition}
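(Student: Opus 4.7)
The plan is to directly decompose $F$ in its algebraic normal form with respect to the splitting $\ftwo^{n+1} = \ftwo^n \times \ftwo$, and then use the additive affine freedom $+C$ in the EA-equivalence to absorb the unwanted low-degree terms.

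Concretely, I would first write any input as $(x,y)$ with $x \in \ftwo^n$ and $y \in \ftwo$, and use the identity $y^2 = y$ in $\ftwo$ to observe that any coordinate function $f_i$ of $F$, being of degree at most $2$ in the $n+1$ variables, can be uniquely expanded as
\[
f_i(x,y) = q_i(x) + y \cdot \lambda_i(x) + y \cdot c_i + d_i,
\]
where $q_i \colon \ftwo^n \to \ftwo$ has degree at most $2$, $\lambda_i \colon \ftwo^n \to \ftwo$ is linear, and $c_i,d_i \in \ftwo$. Indeed, any monomial of degree at most $2$ in $(x_1,\dots,x_n,y)$ falls into one of these buckets: a monomial in $x$ alone lands in $q_i$; a monomial $x_j y$ contributes to $\lambda_i$; the monomial $y^2 = y$ and the monomial $y$ both contribute to $c_i \cdot y$; and a constant contributes to $d_i$.

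Next, I would collect the first $n$ coordinate functions into $G,L_1$ and the last one into $r,\ell$, obtaining
\[
F(x,y) \;=\; \begin{pmatrix} G(x) \\ r(x) \end{pmatrix} \;+\; y \cdot \begin{pmatrix} L_1(x) \\ \ell(x) \end{pmatrix} \;+\; y \cdot \begin{pmatrix} v_1 \\ v_2 \end{pmatrix} \;+\; \begin{pmatrix} w_1 \\ w_2 \end{pmatrix},
\]
where $G$ has degree at most $2$, $r$ has degree at most $2$, $L_1$ is linear from $\ftwo^n$ to $\ftwo^n$, $\ell$ is linear from $\ftwo^n$ to $\ftwo$, and $v_1,w_1 \in \ftwo^n$, $v_2,w_2 \in \ftwo$ are constants.

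Finally, I would observe that the map $C \colon (x,y) \mapsto y \cdot (v_1,v_2)^\top + (w_1,w_2)^\top$ is affine in $(x,y)$, so subtracting it from $F$ yields an EA-equivalent function (take $A=B=\mathrm{id}$, $a=b=0$ in Proposition~\ref{prop:ea-trimmings}'s notation, i.e., $G = F + C$). That subtraction removes exactly the last two summands above, producing the desired form
\[
T(x,y) = \begin{pmatrix} G(x) \\ r(x) \end{pmatrix} + \begin{pmatrix} L(x) \\ \ell(x) \end{pmatrix} \cdot y
\]
with $L \coloneqq L_1$. There is no real obstacle: the only thing one must be careful about is the bookkeeping of the $y^2=y$ reduction and the verification that the discarded terms $y\cdot v_i + w_i$ really do constitute an affine function of $(x,y)$ so that they can be absorbed into the $+C$ part of EA-equivalence rather than requiring some more subtle transformation.
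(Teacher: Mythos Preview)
Your proof is correct and follows essentially the same approach as the paper's. The only cosmetic difference is the order of operations: the paper first applies an EA-transformation to normalize so that $H(0,0)=H(0,1)=0$ and $h(0,0)=h(0,1)=0$, then evaluates at $y=0$ and $y=1$ and uses that the difference must have degree at most $1$ (hence is linear after the normalization), whereas you read off the decomposition directly from the ANF and strip the leftover affine part $C(x,y)=y\cdot(v_1,v_2)^\top+(w_1,w_2)^\top$ at the end. These are the same argument in a different order.
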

\begin{proof}
By applying an EA-transformation to $F$, we can obtain a function 
\begin{eqnarray*}
T:\ftwo^n \times \ftwo \to \ftwo^n \times \ftwo,
\left(\begin{array}{c}
     x  \\
     y 
\end{array}\right) \mapsto
\left(\begin{array}{c}
     H(x,y)  \\
     h(x,y) 
\end{array}\right)\;,
\end{eqnarray*}
where $H\colon \ftwo^n \times \ftwo \rightarrow \ftwo^n$ is of algebraic degree at most 2, $H(0,0)=H(0,1) = 0$ and $h \colon \ftwo^n \times \ftwo \to \ftwo$ is of algebraic degree at most 2 with $h(0,0)=h(0,1) = 0$. Then, by defining $G(x) = H(x,0)$ and $r(x) = h(x,0)$, we obtain 
\begin{align}\label{eq:different_repr_standard_form}T\left(\begin{array}{c}
     x  \\
     y 
\end{array}\right) = \left(\begin{array}{c}
     G(x)  \\
     r(x) 
\end{array}\right)(y+1) + \left(\begin{array}{c}
     H(x,1)  \\
     h(x,1) 
\end{array}\right)y\;.\end{align}
Note that since $(G,r)$ is a restriction of $(H,h)$ to a linear hyperplane, it is also of algebraic degree at most 2.
Now, $T$ is quadratic if and only if $x \mapsto (G(x)+H(x,1),r(x)+h(x,1))$ is of algebraic degree at most 1, which means that $H(x,1) = G(x) + L(x)$ for an affine function $L$ and $h(x,1) = r(x) + \ell(x)$ for an affine function $\ell$. Since we chose $(H(0,1),h(0,1)) = (H(0,0),h(0,0)) = (0,0)$, we have $(L(0),\ell(0))=(0,0)$ and both $L$ and $\ell$ must therefore be linear.
\end{proof}

\begin{definition}
\label{def:r-extendability}
Let $G \colon \ftwo^{n} \rightarrow \ftwo^{n}$ be a quadratic APN function and let $r \colon \ftwo^n \rightarrow \ftwo$ be a Boolean function of algebraic degree at most 2. The function $G$ is called $r$-extendable if there exist two linear functions $L \colon \ftwo^n \rightarrow \ftwo^n$, $\ell \colon \ftwo^n \rightarrow \ftwo$ such that 
\begin{eqnarray*}
T:\ftwo^n \times \ftwo &\to& \ftwo^n \times \ftwo \\
\left(\begin{array}{c}
     x  \\
     y 
\end{array}\right) &\mapsto& 
\left(\begin{array}{c}
     G(x)  \\
     r(x) 
\end{array}\right)+
\left(\begin{array}{c}
     L(x)  \\
     \ell(x) 
\end{array}\right)\cdot y
\end{eqnarray*}
is APN. If $T$ is APN, we say that the tuple $(G,r,L,\ell)$ \emph{yields an APN function} $T$ and we say that $T$ is an \emph{APN extension of $G$ in standard form}.
\end{definition}

\begin{remark}
For any linear mappings $L \colon \ftwo^n \rightarrow \ftwo^n, \ell \colon \ftwo^n \rightarrow \ftwo$ and any Boolean function $r \colon \ftwo^n \rightarrow \ftwo$, we remark that if two functions $G,G' \colon \ftwo^n \rightarrow \ftwo^n$ are EA-equivalent, there exist linear mappings  $L' \colon \ftwo^n \rightarrow \ftwo^n, \ell' \colon \ftwo^n \rightarrow \ftwo$ and a Boolean function $r' \colon \ftwo^n \rightarrow \ftwo$ EA-equivalent to $r$ such that the two functions
\begin{align*}
\left(\begin{array}{c}
     x  \\
     y 
\end{array}\right) \mapsto
\left(\begin{array}{c}
     G(x)  \\
     r(x) 
\end{array}\right)+
\left(\begin{array}{c}
     L(x)  \\
     \ell(x) 
\end{array}\right)\cdot y, \quad
\left(\begin{array}{c}
     x  \\
     y 
\end{array}\right) \mapsto 
\left(\begin{array}{c}
     G'(x)  \\
     r'(x) 
\end{array}\right)+
\left(\begin{array}{c}
     L'(x)  \\
     \ell'(x) 
\end{array}\right)\cdot y
\end{align*}
are EA-equivalent as well. Further, note that if $T$ is given in the same form as in Proposition~\ref{prop:standard_form}, the EA-equivalence class of $G$ is contained within the trim spectrum of $T$ by choosing $H = \ftwo^n \times \{0\}$ and $\beta = e_{n+1}$.
\end{remark}

Using a recursive tree search similar to the approach described in~\cite{DBLP:journals/tit/BeierleBL21,DBLP:journals/corr/abs-2009-07204}, we conducted a search for 8-bit quadratic APN functions that are an extension of a quadratic APN function in dimension seven. In particular, we applied the following search procedure:
\begin{enumerate}
    \item Fix a representative $G \colon \ftwo^7 \rightarrow \ftwo^7$ of one of the 488 EA-equivalence classes of 7-bit quadratic APN functions.
    \item Guess the Boolean function $r$ of a possible APN extension $T$ of $G$ in standard form, i.e., randomly choose a function $r \colon \ftwo^n \rightarrow \ftwo$ of algebraic degree at most 2.
    \item Recursively construct the linear function $(L,\ell) \colon \ftwo^n \times \ftwo \rightarrow \ftwo^n \times \ftwo$ such that $T$ is APN. In case there is a contradiction with the property of $T$ being APN, the recursive algorithm backtracks. If a suitable function $(L,\ell)$ cannot be found after a predetermined number of iterations, we abort the search. Otherwise, we found an APN extension of $G$.
\end{enumerate}

By repeatedly iterating the above algorithm, we found 6,368 new quadratic APN functions in dimension eight up to EA-equivalence. With our implementation, it takes about 2 CPU hours to find one APN extension in dimension eight. The new APN functions are available in the dataset~\cite{our_functions} and our implementation is available at~\cite{trims_code}. The check for EA-inequivalence to the previously-known APN functions was done using the method explained in~\cite{ortho_paper}. In particular, we computed the extended Walsh spectra and the differential spectra of the ortho-derivatives of all the known and found quadratic 8-bit APN functions, which is a strongly discriminating EA-invariant.

As our new 8-bit APN functions are extensions of quadratic 7-bit APN functions by construction, they are all connected to some nodes at height 7 in the trimming graph. From the previously-known 8-bit APN functions (see Figure~\ref{fig:trimming-graph-n8}), one might expect that there are only a few of the quadratic 7-bit APN functions that can be extended to quadratic 8-bit APN functions. However, it turns out that the vast majority of the quadratic APN functions in dimension $n=7$ can be extended to a quadratic APN function in dimension $n=8$. In fact, this observation is the reason we raised Conjecture~\ref{con:recursive}.

\begin{remark}
    It is possible to further restrict the definition of an ``APN extension of $G$  in standard form''. Let $\mathcal{Q}_n$ be the set of quadratic homogeneous Boolean functions mapping $n$ bits to 1, i.e. the set of $n$-bit quadratic Boolean functions with no linear or constant terms. Furthermore, let $G$ be an $n$-bit quadratic APN function, and let $(G,r,L,\ell)$ yield an APN function $T$. Then we can impose for $r$ to be in $\mathcal{Q}_n$, and for all coordinates $G_i$ of $G$ to be in $\mathcal{Q}_n$ as well: removing all affine terms is the same as adding an affine function to the output of $T$, which would not change its EA-equivalence class. This first restriction means that $r$ can be searched for in a space of dimension $n(n-1)/2$.
    We can go further. In the procedure outlined above for extending a given function $G$, we first guess $r$ and then find $(L, \ell)$. For a given $r$, setting $r' = r + \sum_i \epsilon_i G_i$ for some $\epsilon \in \ftwo^n$ and then searching for all $(L, \ell)$ would yield the same EA-equivalence classes as starting from $r$. Indeed, we would simply find pairs $(L, \ell')$ where $\ell$ is replaced by $\ell' = \ell + \sum_i \epsilon_i L_i$. Thus, we can safely search for $r$ in the complement of the span of the coordinates of $G$, i.e. in a space of dimension $n(n-1)/2 - n$. For $n=7$, this quantity is only equal to 15.
    Unfortunately, for a given tuple $(G\colon \ftwo^7 \rightarrow \ftwo^7,r\colon \ftwo^7 \rightarrow \ftwo)$, exhausting all pairs $(L,\ell)$ using our implementation already takes several minutes, and exhausting all possible such pairs for each of the $488 \cdot 2^{15}$ choices of $(G,r)$ is simply infeasible. Thus, while this reduction of the search space would work, it would be of no practical impact at this stage due to the cost of finding $(L, \ell)$. Still, should progress be made in this direction, it might become possible to exhaustively find all quadratic 8-bit APN extensions of quadratic APN functions in dimension $n=7$.
\end{remark}

\paragraph{The Number of Known Instances of 8-Bit APN Functions} The 6,368 APN functions constructed in this work together with the 23 APN functions listed in~\cite{DBLP:journals/amco/EdelP09} (including one non-quadratic monomial function), the 8,157 APN functions constructed by the QAM method~\cite{DBLP:journals/iacr/YuWL13}, the 10 APN functions presented in~\cite{weng2013quadratic}, the two APN functions from the Taniguchi family~\cite{DBLP:journals/dcc/Taniguchi19a}, the 12,921 APN functions from~\cite{DBLP:journals/corr/abs-2009-07204}, and the 5,412 APN functions presented in~\cite{DBLP:journals/iacr/YuP21}, are all APN functions in dimension eight up to CCZ-equivalence known until January 2022 (32,893 in total).

\section{Quadratic APN Functions With Maximum Linearity}
\label{sec:0-extendable}
The following simple observation illustrates how a quadratic function in dimension $n+1$ and having linearity $2^n$ can be obtained as an extension of a function on $n$ bit. A similar argument (for APN functions) was already given in~\cite[Remark 12]{DBLP:journals/tit/Carlet18}.
\begin{proposition}
\label{prop:high_linearity}
Let $F \colon \ftwo^{n+1} \rightarrow \ftwo^{n+1}$ be a quadratic function with linearity $2^n$. Then, there exists a function $G\colon \ftwo^n \rightarrow \ftwo^n$ of algebraic degree at most 2 and two linear functions $L \colon \ftwo^n \rightarrow \ftwo^n$, $\ell \colon \ftwo^n \rightarrow \ftwo, \ell \neq 0$ such that $F$ is EA-equivalent to
\begin{eqnarray*}
T:\ftwo^n \times \ftwo &\to& \ftwo^n \times \ftwo \\
\left(\begin{array}{c}
     x  \\
     y 
\end{array}\right) &\mapsto& 
\left(\begin{array}{c}
     G(x)  \\
     0 
\end{array}\right)+
\left(\begin{array}{c}
     L(x)  \\
     \ell(x) 
\end{array}\right)\cdot y\;.
\end{eqnarray*}
\end{proposition}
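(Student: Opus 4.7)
The plan is to reduce $F$ to the desired form through a sequence of EA-operations: first, concentrate a component of maximum linearity into the last output coordinate; second, normalise that coordinate using the classification of quadratic forms over $\ftwo$; and third, read off the remaining coordinates in the required decomposition.

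Since the linearity of $F$ equals $2^n$, there exists $\beta \in \ftwo^{n+1}\setminus\{0\}$ such that the component $f_\beta(x) := \langle \beta,F(x)\rangle$ has linearity exactly $2^n$. By a linear change of coordinates in the output (an EA-operation), I may assume that $f_\beta$ is the last coordinate $F_{n+1}$ of $F$. Hence from now on $F_{n+1}$ is a quadratic Boolean function on $\ftwo^{n+1}$ of linearity $2^n$.

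The core step is the normalisation of $F_{n+1}$. By the property of the Walsh transform of quadratic Boolean functions recalled just before the definition of maximum linearity, $F_{n+1}$ has linearity $2^n = 2^{((n+1)+(n-1))/2}$ precisely when the vector space $\{a \mid x \mapsto F_{n+1}(x)+F_{n+1}(x+a)\text{ is constant}\}$ has dimension $n-1$; equivalently, the quadratic part of $F_{n+1}$ has rank $2$ as an alternating bilinear form over $\ftwo$. By the classification of such forms, there is a linear change of input variables after which this quadratic part equals $x_1 x_2$, and adding an affine function of the input to the output (also an EA-operation) absorbs the residual affine terms, so $F_{n+1}$ becomes $(x_1,\dots,x_{n+1})\mapsto x_1 x_2$. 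Finally, a permutation of input variables swapping $x_2$ with the last one (renamed $y$) rewrites this as $F_{n+1}(x,y) = \ell(x)\cdot y$ with $\ell(x) = x_1$, a non-zero linear form on $\ftwo^n$.

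For the first $n$ coordinates I use the elementary fact that any function $H \colon \ftwo^n\times\ftwo \to \ftwo^n$ of algebraic degree at most $2$ decomposes uniquely as $H(x,y) = p(x) + y\cdot A(x)$ with $p$ of algebraic degree at most $2$ and $A$ affine, since in the algebraic normal form the monomials containing $y$ are at most linear in the remaining variables. Subtracting the linear function $(x,y)\mapsto y\cdot A(0)$ from those $n$ output coordinates (an EA-operation) turns $A$ into a linear map $L$, and setting $G := p$ yields the claimed form. I expect the main obstacle to be the normalisation step in the third paragraph, which requires pinning down precisely how the Walsh-transform condition interacts with the structure theorem for alternating bilinear forms in characteristic $2$; the rest amounts to routine bookkeeping with EA-equivalence.
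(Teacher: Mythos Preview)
Your proof is correct and follows essentially the same approach as the paper. The only notable difference is that the paper outsources the normalisation of the high-linearity component to a reference (Carlet's book, which gives directly that such a component is EA-equivalent to $(x,y)\mapsto \ell(x)y$), whereas you supply a self-contained argument via the rank of the associated alternating form; the subsequent decomposition of the first $n$ coordinates is the same computation written in slightly different language (the paper sets $G(x)=H(x,0)$ and uses $H(0,0)=H(0,1)=0$ to force $L$ linear, while you read off the ANF and subtract $y\cdot A(0)$).
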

\begin{proof}
Since $F \colon \ftwo^{n+1} \rightarrow \ftwo^{n+1}$ is a quadratic function with linearity $2^n$, it consists of a component which is EA-equivalent to $(x,y) \mapsto \ell(x)y$ for a linear function $\ell \colon \ftwo^n \rightarrow \ftwo, \ell \neq 0$, see~\cite[Thm.\@10 and p.\@ 173]{carlet_2021}.
Therefore, by applying an EA-transformation to $F$, we can obtain a function 
\begin{eqnarray*}
T:\ftwo^n \times \ftwo \to \ftwo^n \times \ftwo,
\left(\begin{array}{c}
     x  \\
     y 
\end{array}\right) \mapsto
\left(\begin{array}{c}
     H(x,y)  \\
     \ell(x)y 
\end{array}\right)\;,
\end{eqnarray*}
where $H\colon \ftwo^n \times \ftwo \rightarrow \ftwo^n$ is of algebraic degree at most 2, $H(0,0)=H(0,1) = 0$ and $\ell \colon \ftwo^n \to \ftwo, \ell \neq 0$ is linear. Then, by defining $G(x) = H(x,0)$, we obtain 
\begin{align}\label{eq:different_repr}T\left(\begin{array}{c}
     x  \\
     y 
\end{array}\right) = \left(\begin{array}{c}
     G(x)  \\
     0 
\end{array}\right)(y+1) + \left(\begin{array}{c}
     H(x,1)  \\
     \ell(x) 
\end{array}\right)y\;.\end{align}
Note that since $G$ is a restriction of $H$ to a linear hyperplane, it is also of algebraic degree at most 2.
Now, $T$ is quadratic if and only if $x \mapsto G(x)+H(x,1)$ is of algebraic degree at most 1, which means that $H(x,1) = G(x) + L(x)$ for an affine function $L$. Since we chose $H(0,1) = H(0,0) = 0$, we have $L(0)=0$ and $L$ must therefore be linear.
\end{proof}

\begin{remark}
\label{rem:extendability_ea}
For any linear mappings $L \colon \ftwo^n \rightarrow \ftwo^n, \ell \colon \ftwo^n \rightarrow \ftwo, \ell \neq 0$, we remark that if two functions $G,G' \colon \ftwo^n \rightarrow \ftwo^n$ are EA-equivalent, there exist linear mappings  $L' \colon \ftwo^n \rightarrow \ftwo^n, \ell' \colon \ftwo^n \rightarrow \ftwo, \ell' \neq 0$ such that the two functions
\begin{align*}
\left(\begin{array}{c}
     x  \\
     y 
\end{array}\right) \mapsto
\left(\begin{array}{c}
     G(x)  \\
     0 
\end{array}\right)+
\left(\begin{array}{c}
     L(x)  \\
     \ell(x) 
\end{array}\right)\cdot y, \quad
\left(\begin{array}{c}
     x  \\
     y 
\end{array}\right) \mapsto 
\left(\begin{array}{c}
     G'(x)  \\
     0 
\end{array}\right)+
\left(\begin{array}{c}
     L'(x)  \\
     \ell'(x) 
\end{array}\right)\cdot y
\end{align*}
are EA-equivalent as well. Thus, the function $G$ as in Proposition~\ref{prop:high_linearity} can be chosen up to EA-equivalence.
\end{remark}

 We now deduce when the functions $T$ of the same form as given in Proposition~\ref{prop:high_linearity} are APN.

\begin{theorem}
\label{thm:high_linearity_apn}
Let $G:\ftwo^n \to \ftwo^n$ be a function of algebraic degree at most 2, $L:\ftwo^n \to \ftwo^n$ be linear and $\ell:\ftwo^n \to \ftwo, \ell \neq 0$ be linear. Then
\begin{eqnarray*}
T:\ftwo^n \times \ftwo &\to& \ftwo^n \times \ftwo \\
\left(\begin{array}{c}
     x  \\
     y 
\end{array}\right) &\mapsto& 
\left(\begin{array}{c}
     G(x)  \\
     0 
\end{array}\right)+
\left(\begin{array}{c}
     L(x)  \\
     \ell(x) 
\end{array}\right)\cdot y
\end{eqnarray*}
is APN if and only if the following two assertions hold:
\begin{enumerate}
\item $G$ is APN.
\item $\langle \pi_G(\alpha), L(\alpha) \rangle=1$ for all $\alpha \in \ftwo^n \setminus \{0\}$ with $\ell(\alpha)=0$.
\end{enumerate}
\end{theorem}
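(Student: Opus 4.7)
The approach is to directly analyze the kernel of the symmetric bilinear form associated to $T$. Since $T$ is quadratic, it is APN if and only if, for every nonzero $(\alpha,b)\in\ftwo^n\times\ftwo$, the linear map
\[
\Delta_{(\alpha,b)}\colon (x,y)\mapsto T(x,y) + T(x+\alpha,y+b) + T(\alpha,b) + T(0,0)
\]
has kernel exactly $\{(0,0),(\alpha,b)\}$. A direct computation, using only that $L$ and $\ell$ are linear, yields
\[
\Delta_{(\alpha,b)}(x,y) = \bigl(B_\alpha(x) + L(x)\,b + L(\alpha)\,y,\ \ell(x)\,b + \ell(\alpha)\,y\bigr),
\]
where $B_\alpha$ denotes the bilinear form of $G$. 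The plan is to split the analysis according to $b\in\{0,1\}$.

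For $b=0$, the kernel equations reduce to $\ell(\alpha)y=0$ and $B_\alpha(x)=L(\alpha)y$. A short case distinction on $\ell(\alpha)$ shows that, for every nonzero $\alpha$, the kernel has size exactly $2$ if and only if $\ker B_\alpha=\{0,\alpha\}$, and additionally $L(\alpha)\notin\mathrm{Im}(B_\alpha)$ whenever $\ell(\alpha)=0$. The first condition is exactly ``$G$ is APN''; because $G$ is quadratic APN, $\mathrm{Im}(B_\alpha)=\pi_G(\alpha)^\perp$, so the second condition is precisely $\langle \pi_G(\alpha),L(\alpha)\rangle=1$. This establishes that (1) and (2) are necessary and also handles sufficiency on this half of the cases.

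For $b=1$, the kernel is cut out by $\ell(x)=\ell(\alpha)y$ and $B_\alpha(x)+L(x)+L(\alpha)y=0$. The case $\alpha=0$ reduces to $x\in\ker L\cap\ker\ell$, and (2) already implies $\ker L\cap\ker\ell=\{0\}$, since any nonzero element in the intersection would satisfy $\ell(\alpha)=0$ and $L(\alpha)=0$, contradicting $\langle\pi_G(\alpha),L(\alpha)\rangle=1$. For $\alpha\ne 0$, eliminating $y$ (either free with $\ell(x)=0$ when $\ell(\alpha)=0$, or determined by $y=\ell(x)$ when $\ell(\alpha)=1$) leaves, in every surviving subcase and possibly after the substitution $x\mapsto x+\alpha$ together with $B_\alpha(\alpha)=0$, an equation of the form $B_\alpha(z)=L(z)$ for some nonzero $z$ satisfying $\ell(z)=0$. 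The symmetry $B_\alpha(z)=B_z(\alpha)\in\mathrm{Im}(B_z)=\pi_G(z)^\perp$ then forces $0=\langle\pi_G(z),L(z)\rangle$ upon pairing with $\pi_G(z)$, contradicting (2). Hence the kernel is exactly $\{(0,0),(\alpha,1)\}$.

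The main obstacle is the subcase $b=1$, $\ell(\alpha)=\ell(x)=1$, where the equation $B_\alpha(x)+L(x)+L(\alpha)=0$ does not a priori involve a vector with vanishing $\ell$. The trick is the substitution $x':=x+\alpha$: since $\ell(\alpha)=1$ and $B_\alpha(\alpha)=0$, it converts the equation into $B_\alpha(x')=L(x')$ with $\ell(x')=0$, at which point the pairing-with-$\pi_G$ argument applies to $x'$ and closes the case. Combining the $b=0$ and $b=1$ analyses yields the equivalence claimed in the theorem.
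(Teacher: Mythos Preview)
Your proof is correct and follows essentially the same approach as the paper's. Both arguments split on the value of the $\ftwo$-coordinate of the input difference, derive conditions (1) and (2) from the $b=0$ case, and in the $b=1$ case exploit the symmetry $B_\alpha(z)=B_z(\alpha)\in\mathrm{Im}(B_z)=\pi_G(z)^\perp$ to conclude. The only cosmetic differences are that you work with the kernel of the associated bilinear map $\Delta_{(\alpha,b)}$ (a natural simplification since $T$ has degree at most $2$) whereas the paper counts solutions of $T(x,y)+T(x+\alpha,y+\beta)=(\gamma,\delta)$ directly, and that the paper packages the $b=1$ analysis into a single injectivity statement for $H_\alpha(x)=(G(x)+G(x+\alpha)+L(x),\ell(x))$ by computing $H_\alpha(0)+H_\alpha(w)=(B_w(\alpha)+L(w),\ell(w))$, while you reach the same contradiction through explicit subcase analysis and the substitution $x\mapsto x+\alpha$.
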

\begin{proof}
By definition, the function $T$ is APN if and only if, for all $\alpha, \gamma \in \ftwo^n$, $\beta, \delta \in \ftwo$ with $(\alpha,\beta) \neq (0,0)$, the equation 
\begin{equation}
\label{eq:apn_T}
    T\left( \begin{array}{c}
     x  \\
     y 
\end{array}\right) + T\left( \begin{array}{c}
     x+\alpha  \\
     y + \beta 
\end{array}\right) = \left( \begin{array}{c}
    \gamma  \\
     \delta 
\end{array} \right)
\end{equation}
has at most two solutions $(x,y) \in \ftwo^n \times \ftwo$. By the definition of $T$, Equation~(\ref{eq:apn_T}) is equivalent to
\begin{equation}
\label{eq:apn_T2}
    \left( \begin{array}{c}
     G(x) + G(x + \alpha)  \\
     0 
\end{array}\right) + \left( \begin{array}{c}
     L(\alpha) \\
     \ell(\alpha) 
\end{array}\right)y + \left( \begin{array}{c}
     L(x) \\
     \ell(x) 
\end{array}\right)\beta + \left( \begin{array}{c}
     L(\alpha) \\
     \ell(\alpha) 
\end{array}\right)\beta = \left( \begin{array}{c}
    \gamma  \\
     \delta 
\end{array} \right)\;.
\end{equation}

For every $\alpha \in \ftwo^n$, we recall that $B_{\alpha}$ is defined as $B_{\alpha} \colon \ftwo^n \rightarrow \ftwo^n, x \mapsto G(x) + G(x+\alpha) + G(\alpha) + G(0)$, which is a linear mapping if $G$ is of algebraic degree at most 2. We now consider two cases.

\paragraph{Case $\beta = 0$, $\alpha \neq 0$} For $y=0$, Equation~(\ref{eq:apn_T2}) simplifies to the system $\left(B_{\alpha}(x) = \gamma + G(\alpha) + G(0) \right)\wedge (\delta = 0)$, which directly yields that $G$ being APN is a necessary condition for $T$ being APN. Conversely, if $G$ is APN, this system has at most 2 solutions $x \in \ftwo^n$.

For $y = 1$, Equation~(\ref{eq:apn_T2}) simplifies to
\begin{eqnarray}
\label{eq:apn_T3}
    \left( \begin{array}{c}
     B_{\alpha}(x) \\
     \ell(\alpha) 
\end{array}\right) = \left( \begin{array}{c}
     \gamma + G(\alpha) + G(0) + L(\alpha) \\
     \delta
\end{array}\right)\;.
\end{eqnarray}

Similarly as above, Equation~(\ref{eq:apn_T3}) has no more than two solutions $x \in \ftwo^n$ if $G$ is APN. For $T$ being APN, we also need to require that Equation~(\ref{eq:apn_T3}) has no solutions $x \in \ftwo^n$ in the case where both $\ell(\alpha) = 0$ and $\gamma + G(\alpha) + G(0) \in \img{B_{\alpha}}$. This is equivalent to the condition that $L(\alpha) \notin \img{B_{\alpha}}$ whenever $\ell(\alpha) = 0$.
Since $\img{B_{\alpha}} = \{ x \in \ftwo^n \mid \langle \pi_G(\alpha),x \rangle = 0\}$ we obtain that $\langle \pi_G(\alpha),L(\alpha) \rangle = 1$ for all $\alpha \in \ftwo^n \setminus \{0\}$ with $\ell(\alpha ) = 0$.

To summarize, the two conditions 1 and 2 in the statement of the theorem are necessary and sufficient for Equation~(\ref{eq:apn_T2}) having at most 2 solutions $(x,y) \in \ftwo^n \times \ftwo$ for all $\alpha,\gamma \in \ftwo^n$, $\beta,\delta \in \ftwo$ with $\beta = 0, \alpha \neq 0$.

\paragraph{Case $\beta = 1$} We will show that Condition 2 of the statement is a sufficient condition for Equation~(\ref{eq:apn_T2}) having at most 2 solutions $(x,y) \in \ftwo^n \times \ftwo$. 

For $y=1$, Equation~(\ref{eq:apn_T2}) simplifies to 
\begin{eqnarray}
\label{eq:apn_T5}
    \left( \begin{array}{c}
     G(x) + G(x+\alpha) + L(x) \\
     \ell(x) 
\end{array}\right) = \left( \begin{array}{c}
     \gamma \\
     \delta
\end{array}\right)
\end{eqnarray}
and for $y=0$, Equation~(\ref{eq:apn_T2}) simplifies to
\begin{eqnarray}
\label{eq:apn_T6}
    \left( \begin{array}{c}
     G(x) + G(x+\alpha) + L(x+\alpha) \\
     \ell(x+\alpha) 
\end{array}\right) = \left( \begin{array}{c}
     \gamma \\
     \delta
\end{array}\right)\;.
\end{eqnarray}
An element $x \in \ftwo^n$ is a solution of Equation~(\ref{eq:apn_T5}) if and only if $x+\alpha$ is a solution of Equation~(\ref{eq:apn_T6}). Therefore, for Equation~(\ref{eq:apn_T2}) having at most two solutions $(x,y) \in \ftwo^n \times \ftwo$, we need that Equation~(\ref{eq:apn_T5}) has at most 1 solution $x \in \ftwo^n$. In other words, we need to show that the mapping
\begin{eqnarray*}
    H_{\alpha}\colon \ftwo^n \rightarrow \ftwo^n \times \ftwo, x \mapsto \left( \begin{array}{c}
     G(x) + G(x+\alpha) + L(x) \\
     \ell(x) 
\end{array}\right)
\end{eqnarray*}
is injective. For this, let us choose an arbitrary element $x \in \ftwo^n$ and a non-zero element $w \in \ftwo^n$ and consider $H_{\alpha}(x) + H_{\alpha}(x +w)$, which is equal to
\begin{eqnarray}
\label{eq:apn_T7}
\left( \begin{array}{c}
     G(x) + G(x+\alpha) + G(x+w) + G(x+w+\alpha) + L(w) \\
     \ell(w) 
\end{array}\right)\;.
\end{eqnarray}
Since $G$ is of algebraic degree at most 2, the mapping $x \mapsto G(x) + G(x+\alpha) + G(x+w) + G(x+w+\alpha)$ is constant. Thus, 
\begin{eqnarray*}
H_{\alpha}(x) + H_{\alpha}(x+w) = H_{\alpha}(0) + H_{\alpha}(w) = \left( \begin{array}{c}
     B_w(\alpha) + L(w) \\
     \ell(w) 
\end{array}\right)\;.
\end{eqnarray*}
The right-hand side can only be equal to 0 if $L(w) \in \img{B_w}$ and $\ell(w) = 0$. Therefore, if $L(w) \notin \img{B_w}$ for all $w \in \ftwo^n \setminus \{0\}$ with $\ell(w) = 0$, the mapping $H_{\alpha}$ must be injective.
\end{proof}

Note that Condition 2 of Theorem~\ref{thm:high_linearity_apn} implies that the mapping \begin{eqnarray*}
H_0 \colon \ftwo^n &\rightarrow \ftwo^n \times \ftwo \\
x &\mapsto
\left( \begin{array}{c}
     L(x) \\
     \ell(x) 
\end{array}\right)
\end{eqnarray*}
has full rank equal to $n$. In particular, if $T$ is APN, $L$ can only be of rank $n$ or $n-1$. We further remark that, if $n>2$ and $T$ is APN, the APN function $G$ must actually be quadratic (since affine functions in dimension $n>2$ cannot be APN).

\begin{corollary}
Let $G \colon \ftwo^n \rightarrow \ftwo^n$ be a quadratic APN function. $G$ is 0-extendable if there exist linear functions $L:\ftwo^n \to \ftwo^n$ and $\ell:\ftwo^n \to \ftwo, \ell \neq 0$ such that $\langle \pi_G(x), L(x) \rangle=1$ for all $x \in \ftwo^n \setminus \{0\}$ with $\ell(x)=0$.
\end{corollary}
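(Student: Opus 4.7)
The statement is an immediate specialization of Theorem~\ref{thm:high_linearity_apn} to the case where the Boolean function $r$ in Definition~\ref{def:r-extendability} is identically zero. Since the hypotheses of the corollary give us exactly (a) a quadratic APN function $G$ and (b) linear maps $L,\ell$ with $\ell \neq 0$ satisfying the orthogonality condition $\langle \pi_G(x), L(x) \rangle = 1$ for every non-zero $x$ in $\ker \ell$, my plan is to simply invoke Theorem~\ref{thm:high_linearity_apn} with these $G, L, \ell$.

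More precisely, I would define
\begin{eqnarray*}
T:\ftwo^n \times \ftwo &\to& \ftwo^n \times \ftwo \\
\left(\begin{array}{c} x \\ y \end{array}\right) &\mapsto&
\left(\begin{array}{c} G(x) \\ 0 \end{array}\right) +
\left(\begin{array}{c} L(x) \\ \ell(x) \end{array}\right) \cdot y\;,
\end{eqnarray*}
observe that this is exactly the form appearing in Theorem~\ref{thm:high_linearity_apn} (and is obtained by specializing Definition~\ref{def:r-extendability} to $r = 0$), and then verify that both assertions of Theorem~\ref{thm:high_linearity_apn} are satisfied: Condition~1 holds because $G$ is APN by assumption, and Condition~2 is precisely the orthogonality hypothesis of the corollary. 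The theorem then yields that $T$ is APN, which by Definition~\ref{def:r-extendability} means that $G$ is $0$-extendable.

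There is essentially no obstacle here: the entire content of the corollary is the translation between the statement about 0-extendability (a defined notion) and the pair of assertions of Theorem~\ref{thm:high_linearity_apn}. The only small bookkeeping item is to note that the theorem's hypothesis $\ell \neq 0$ is automatically satisfied by the corollary's hypothesis, so that the application of the theorem is formally valid.
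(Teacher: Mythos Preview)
Your proposal is correct and matches the paper's treatment: the corollary is stated without proof because it follows immediately from Theorem~\ref{thm:high_linearity_apn} by taking $r=0$ in Definition~\ref{def:r-extendability}, exactly as you describe. There is nothing to add.
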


From Remark~\ref{rem:extendability_ea}, it is obvious that the property of being 0-extendable is invariant under EA-equivalence. Moreover, since quadratic APN functions in odd dimension must be almost bent, 0-extendable APN functions can only exist in dimension $n=2$ or in odd dimension $n$. 

Let us now define the set 
\begin{align*}\Gamma_{G,\ell} \hspace{-.2em}\coloneqq  \hspace{-.2em} \{L \in \mathcal{L}(\ftwo^n,\ftwo^n) \mid  \langle \pi_G(x),L(x) \rangle = 1 \text{ for all } x \in \ftwo^n \setminus \{0\} \text{ with } \ell(x) = 0\},\end{align*}
which is either empty or an affine subspace of $\mathcal{L}(\ftwo^n,\ftwo^n)$. Given $G$ and $\ell$, the set $\Gamma_{G,\ell}$ can be recovered by first computing the ortho-derivative of $G$ and by then solving a system of $2^{n-1}$ linear equations with $n^2$ unknowns. The following proposition yields a lower bound on $\lvert \Gamma_{G,\ell} \rvert$ in cases where $\Gamma_{G,\ell} \neq \emptyset$.

\begin{proposition}
\label{prop:many_L}
Let $n \in \mathbb{N}, n\geq 3$. Let $G:\ftwo^n \to \ftwo^n$ be a quadratic mapping, $L:\ftwo^n \to \ftwo^n$ be linear and $\ell:\ftwo^n \to \ftwo, \ell \neq 0$ be linear such that $(G,0,L,\ell)$ yields an APN function $T$. For all $\mu, \nu \in \ftwo^n$, the tuple $(G,0,L+B_{\mu}+\ell \nu,\ell)$ yields an APN function EA-equivalent to $T$, where $B_{\mu}\colon \ftwo^n \rightarrow \ftwo^n, x \mapsto G(x) + G(x+\mu) + G(\mu) + G(0)$.

The functions $L + B_{\mu} + \ell \nu$ are pairwise distinct for $\mu,\nu \in \ftwo^n$. Moreover, for every $\mu \in \ftwo^n$, we have 
\begin{align*} 2^{n-1} &= \lvert\left\{\nu \in \ftwo^n \mid \rank(L + B_{\mu} + \ell \nu) = n\right\} \rvert \\
&= \lvert\left\{\nu \in \ftwo^n \mid \rank(L + B_{\mu} + \ell \nu) = n-1\right\}\rvert\;. \end{align*}
\end{proposition}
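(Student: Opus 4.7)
The plan is to prove the three claims of the proposition in order, using one main ingredient per claim.

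For the EA-equivalence claim, I will compose $T$ with the affine bijections $A\colon (x,y)\mapsto (x+\mu y, y)$ on the input and $B\colon (a,b)\mapsto (a+\nu b, b)$ on the output. Since $G$ is quadratic, $G(x+\mu)=G(x)+B_\mu(x)+G(\mu)+G(0)$, so the linear-in-$y$ term in the first coordinate of $T\circ A$ picks up the extra summand $B_\mu(x)y$; applying $B$ further contributes $\nu\ell(x)y$. Everything else that $A$ and $B$ produce (contributions such as $\bigl(G(\mu)+G(0)+L(\mu)+\nu\ell(\mu)\bigr)y$ and $\ell(\mu)y$) depends only on $y$ and is therefore affine in $(x,y)$; it can be absorbed into an additive affine term $C$, so that $B\circ T\circ A + C$ is precisely the function associated with $(G,0,L+B_\mu+\ell\nu,\ell)$ in Definition~\ref{def:r-extendability}. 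Since EA-equivalence preserves APN-ness, this function is APN and EA-equivalent to $T$.

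For pairwise distinctness, I will use that $\mu\mapsto B_\mu$ is linear: writing $B(a,b):=G(a+b)+G(a)+G(b)+G(0)$ for the symmetric bilinear form associated with $G$, we have $B_\mu(x)=B(x,\mu)$, hence $B_\mu+B_{\mu'}=B_{\mu+\mu'}$. The equality $L+B_\mu+\ell\nu=L+B_{\mu'}+\ell\nu'$ then becomes $B_{\mu+\mu'}=\ell(\cdot)(\nu+\nu')$, whose right-hand side is a linear map of rank at most $1$. Since $G$ is APN, the image of each nontrivial derivative of $G$ has size $2^{n-1}$, so $B_a$ has rank $n-1$ for every $a\neq 0$; as $n\ge 3$ this is at least $2$, forcing $\mu=\mu'$. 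Then $\ell\neq 0$ together with $\ell(\cdot)(\nu+\nu')=0$ forces $\nu=\nu'$.

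For the rank distribution, set $M_\mu:=L+B_\mu$ and $K:=\ker\ell$, so $\dim K=n-1$. Fix any $x_1\notin K$. Writing every element of $\ftwo^n$ as either $k$ or $x_1+k$ with $k\in K$, the image of $M_\mu+\ell\nu$ is $M_\mu(K)\cup\bigl(M_\mu(x_1)+\nu+M_\mu(K)\bigr)$, so its rank equals $\dim M_\mu(K)$ when $\nu\in M_\mu(x_1)+M_\mu(K)$ and $\dim M_\mu(K)+1$ otherwise. The crux is to show that $M_\mu$ is injective on $K$. For $\alpha\in K\setminus\{0\}$, Theorem~\ref{thm:high_linearity_apn} gives $\langle\pi_G(\alpha),L(\alpha)\rangle=1$, while the symmetry of $B$ yields $B_\mu(\alpha)=B(\alpha,\mu)=B_\alpha(\mu)\in\img{B_\alpha}$, and $\img{B_\alpha}$ is orthogonal to $\pi_G(\alpha)$ by the defining property of the ortho-derivative. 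Hence $\langle\pi_G(\alpha),M_\mu(\alpha)\rangle=1\neq 0$, so $M_\mu(\alpha)\neq 0$ and $\dim M_\mu(K)=n-1$. The affine coset $M_\mu(x_1)+M_\mu(K)$ then contains exactly $2^{n-1}$ values of $\nu$, all yielding rank $n-1$, while the remaining $2^n-2^{n-1}=2^{n-1}$ values yield rank $n$.

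The step I expect to be most delicate is the last one: combining the image decomposition with the symmetry $B_\mu(\alpha)=B_\alpha(\mu)$ and the ortho-derivative characterization of $\img{B_\alpha}$ is what pins the rank of $M_\mu|_K$ at exactly $n-1$ rather than merely bounding it. Parts~1 and~2 are essentially bookkeeping once the correct input/output transformations and the linearity of $\mu\mapsto B_\mu$ are identified.
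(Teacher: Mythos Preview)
Your proof is correct and closely parallels the paper's. Part~1 is identical in substance: the paper uses the same input/output transformations $M_\mu\colon(x,y)\mapsto(x+\mu y,y)$ and $M_\nu$ (treating the $\mu$- and $\nu$-contributions separately rather than at once, but that is purely cosmetic). Part~2 is a minor rephrasing: where you compare ranks ($B_{\mu+\mu'}$ has rank $n-1\ge 2$ versus $\ell(\cdot)(\nu+\nu')$ of rank $\le 1$), the paper observes that $B_\mu(x)=\nu$ would have to hold on the $2^{n-1}$ elements with $\ell(x)=1$, contradicting that $B_\mu$ is 2-to-1.

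Part~3 differs most. The paper first argues that some $\nu$ makes $L''\coloneqq L+B_\mu+\ell\nu$ invertible, then counts by noting $L''+\ell\nu'$ is invertible iff $\ell\bigl(L''^{-1}(\nu')\bigr)=0$. You instead decompose the image of $M_\mu+\ell\nu$ directly and read off the rank from whether $\nu$ lies in the coset $M_\mu(x_1)+M_\mu(K)$. Both routes rest on the same key fact, that $M_\mu$ is injective on $K=\ker\ell$. The paper obtains this by invoking the already-proved first part (so $(G,0,L+B_\mu,\ell)$ is APN) together with the remark after Theorem~\ref{thm:high_linearity_apn} that $(L+B_\mu,\ell)$ then has rank $n$; your derivation via $\langle\pi_G(\alpha),L(\alpha)\rangle=1$ and the symmetry $B_\mu(\alpha)=B_\alpha(\mu)\in\img{B_\alpha}\subseteq\pi_G(\alpha)^\perp$ is more self-contained and makes the role of the ortho-derivative explicit. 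Either way the count $2^{n-1}$ falls out immediately.
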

\begin{proof}
For an element $c \in \ftwo^n$, let us consider the linear involution
\begin{eqnarray*}
M_{c}:\ftwo^n \times \ftwo &\to& \ftwo^n \times \ftwo \\
\left(\begin{array}{c}
     x  \\
     y 
\end{array}\right) &\mapsto& 
\left(\begin{array}{c}
     x + c y  \\
     y
\end{array}\right)\;.\end{eqnarray*}

For $\mu \in \ftwo^n$, let us consider the function
\begin{eqnarray*}
T'_{\mu}:\ftwo^n \times \ftwo &\to& \ftwo^n \times \ftwo \\
\left(\begin{array}{c}
     x  \\
     y 
\end{array}\right) &\mapsto& 
TM_{\mu}\left(\begin{array}{c}
     x  \\
     y
\end{array}\right)+
\left(\begin{array}{c}
     L(\mu) + G(\mu) + G(0)  \\
     \ell(\mu) 
\end{array}\right)\cdot y\;.
\end{eqnarray*}
By definition, $T'_{\mu}$ is EA-equivalent to $T$. Using the representation of $T$ given in Equation~(\ref{eq:different_repr}), we obtain
\begin{align*}
   T'_{\mu}\left(\begin{array}{c}
     x  \\
     y 
\end{array}\right) &= \left(\begin{array}{c}
     G(x+\mu y)  \\
     0 
\end{array}\right)(y+1) + \left(\begin{array}{c}
     G(x+\mu y) + L(x+\mu y)  \\
     \ell(x+ \mu y) 
\end{array}\right)y \\ 
&+ \left(\begin{array}{c}
     L(\mu) + G(\mu) + G(0)  \\
     \ell(\mu) 
\end{array}\right)y \\
 &= \left(\begin{array}{c}
     G(x)  \\
     0 
\end{array}\right)(y+1) + \left(\begin{array}{c}
     G(x+\mu) + G(\mu) + G(0) + L(x)  \\
     \ell(x) 
\end{array}\right)y \\
&= \left(\begin{array}{c}
     G(x)  \\
     0 
\end{array}\right)(y+1) + \left(\begin{array}{c}
     G(x) + L(x) + B_{\mu}(x)  \\
     \ell(x)
\end{array}\right)y \\
&= \left(\begin{array}{c}
     G(x)  \\
     0 
\end{array}\right) + \left(\begin{array}{c}
     (L + B_{\mu})(x)  \\
     \ell(x) 
\end{array}\right)y\;,
\end{align*}
which is of the same form as in Theorem~\ref{thm:high_linearity_apn}. Thus, $(G,0,L+B_{\mu},\ell)$ yields the APN function $T'_{\mu}$.

For an element $\nu \in \ftwo^n$, we obtain 
\begin{align*}
   M_{\nu} T\left(\begin{array}{c}
     x  \\
     y 
\end{array}\right) = \left(\begin{array}{c}
     G(x)  \\
     0 
\end{array}\right) + \left(\begin{array}{c}
     (L + \ell \nu)(x)  \\
     \ell(x) 
\end{array}\right)y\;.
\end{align*}
The function $M_{\nu}T$ is also EA-equivalent to $T$ and of the same form as in Theorem~\ref{thm:high_linearity_apn}. Thus, $(G,0,L+\ell \nu,\ell)$ yields the APN function $M_{\nu}T$. Combining the above, we obtain that $(G,0,L+B_{\mu}+\nu \ell,\ell)$ yields the APN function $M_{\nu}T'_{\mu}$, which is EA-equivalent to $T$.

To see that the functions $L+ B_{\mu} + \ell \nu$ are pairwise distinct for all $\mu,\nu \in \ftwo^n$, we need to show that the linear mapping 
\begin{eqnarray*}
J:\ftwo^n \times \ftwo^n &\to& \mathcal{L}(\ftwo^n,\ftwo^n) \\
\left(\begin{array}{c}
     \mu  \\
     \nu 
\end{array}\right) &\mapsto& 
B_{\mu} + \ell \nu \end{eqnarray*}
is injective. We show that the kernel of $J$ is trivial. Suppose that, for all $x \in \ftwo^n$, we have $B_{\mu}(x) + \ell(x)\nu = 0$. This implies $B_{\mu}(x) = \nu$ for all $x \in \ftwo^n$ with $\ell(x) = 1$. Since $\ell$ is not the zero mapping and since $B_{\mu}$ is 2-to-1 for every non-zero $\mu$, we have $\mu = 0$. From $\mu=0$, we immediately deduce $\nu = 0$, so the kernel of $J$ is trivial. 

To prove the last statement, let us fix an element $\mu \in \ftwo^n$. We consider $L' \coloneqq L + B_{\mu}$. Suppose that $\rank(L') \neq n$. We can choose an element $\nu \in \ftwo^n$ such that $\nu \notin \img{L'}$. For such $\nu$, we have that $\rank(L' + \ell \nu) = n$. This can be observed by looking at the kernel of $L' + \ell \nu$. For all $x \in \ftwo^n$ with $\ell(x) = 0$, we have $L'(x) + \ell(x) \nu = L'(x) = 0$ if and only if $x=0$, since $\rank((L',\ell)) = n$. For all $x \in \ftwo^n$ with $\ell(x) = 1$, we have $L'(x) +  \ell(x) \nu = 0$ if and only if $L'(x) = \nu$, which is not possible since $\nu \notin \img{L'}$. Therefore, there exists an element $\nu \in \ftwo^n$ such that $L'' \coloneqq L + J(\mu,\nu)$ is invertible. By using a similar argument as above we obtain that, for all $\nu' \in \ftwo^n$, the mapping $L'' +  \ell \nu'$ is invertible if and only if $\ell(L''^{-1}(\nu')) = 0$. The statement follows since there are  exactly $2^{n-1}$ elements $\nu' \in \ftwo^n$ with $\ell(L''^{-1}(\nu')) = 0$.
\end{proof}

Let $G \colon \ftwo^n \rightarrow \ftwo^n$ be a quadratic APN function and $\ell \colon \ftwo^n \rightarrow \ftwo, \ell \neq 0$ be linear. The above proposition states that, if $\Gamma_{G,\ell} \neq \emptyset$, we have $\lvert \Gamma_{G,\ell}\rvert \geq 2^{2n}$. We say that two elements $L,L' \in \Gamma_{G,\ell}$ are \emph{$\Gamma$-equivalent} if there exist $\mu,\nu \in \ftwo^n$ such that $L' = L + B_{\mu} + \ell \nu$. As it was shown in Proposition~\ref{prop:many_L}, if $L,L' \in \Gamma_{G,\ell}$ are $\Gamma$-equivalent, the tuples $(G,0,L,\ell)$ and $(G,0,L',\ell)$ yield EA-equivalent APN functions.

\subsection{A Classification in Dimension Eight}
Recently, Kalgin and Idrisova completely classified all quadratic 7-bit APN functions~\cite{DBLP:journals/iacr/KalginI20a}. In total, there are 488 such functions up to EA-equivalence. By using this classification and by applying the results from Theorem~\ref{thm:high_linearity_apn} and Proposition~\ref{prop:many_L}, we can now classify \emph{all} 8-bit quadratic APN functions with linearity $2^7$.

Let $\mathcal{G} = \{G_1,G_2,\dots,G_{488}\}$ be a set of 7-bit quadratic APN functions that contains one representative of each EA-equivalence class. Algorithm~\ref{alg:classification} describes how $\mathcal{G}$ can be used to obtain a classification of 8-bit quadratic APN functions with maximum linearity up to EA-equivalence.

\begin{algorithm}
\caption{Classification of 8-bit quadratic APN functions with maximum linearity up to EA-equivalence}\label{alg:classification}
\begin{algorithmic}[1]
    \Function{ZeroExtensions}{$G$} \Comment{$G \colon \ftwo^n \rightarrow \ftwo^n$ is a quadratic APN function}
    \State $\mathrm{sol} \gets \emptyset$
    \For{$\gamma \in \ftwo^n \setminus \{0\}$}
	    \State Compute $\Gamma_{G,x \mapsto \langle \gamma,x \rangle}$ by solving a linear system
	    \For{$L \in \Gamma_{G,x \mapsto \langle \gamma,x \rangle}$ up to $\Gamma$-equivalence}
	    \State $\mathrm{sol} \gets \mathrm{sol} \cup T$,  where $T$ is such that $(G,0,L,x \mapsto \langle \gamma,x \rangle)$ yields $T$
	    \EndFor
	    \EndFor
	    \State \Return $\mathrm{sol}$
    \EndFunction
    \vspace{1em}
    \State $\mathrm{sol} \gets \emptyset$
	\For{$G \in \mathcal{G}$}
	    \State $\mathrm{sol} \gets \mathrm{sol} \cup \textsc{ZeroExtensions}(G)$
	\EndFor
	\State \Return $\mathrm{sol}$
\end{algorithmic}
\end{algorithm}

A sage~\cite{sagemath} implementation of Algorithm~\ref{alg:classification} is given in Appendix~\ref{app:sage}. For each quadratic 7-bit APN function $G$, the running time of the $\textsc{ZeroExtensions}$ procedure is less than one second on a PC. To execute this code, the library sboxU~\cite{Perrin_sboxu} is needed.

There are exactly four functions $G$ in $\mathcal{G}$ for which there exists a $\gamma \in \ftwo^7$ such that that $\Gamma_{G,x \mapsto \langle \gamma,x \rangle}$ is not empty. Moreover, for each of those four functions, there is exactly one such $\gamma$. The  space $\Gamma_{G,x \mapsto \langle \gamma,x \rangle}$ is of size $2^{14}$ in all those cases, implying that there is only one element in $\Gamma_{G,x \mapsto \langle \gamma,x \rangle}$ up to $\Gamma$-equivalence. Thus, Algorithm~\ref{alg:classification} outputs four 8-bit quadratic APN functions with linearity $2^7$. Those are pairwise inequivalent up to EA-equivalence.  More precisely, they correspond to the four EA-equivalence classes reported in~\cite{DBLP:journals/corr/abs-2009-07204}. While in~\cite{DBLP:journals/corr/abs-2009-07204}, the authors have shown that there are \emph{at least} four pairwise EA-inequivalent quadratic APN functions in dimension eight with maximum linearity, by using Algorithm~\ref{alg:classification} we now deduced that there are \emph{exactly} four such EA-equivalence classes. 
\begin{theorem}
In dimension eight, there are exactly four quadratic APN functions with linearity $2^7$ up to EA-equivalence.
\end{theorem}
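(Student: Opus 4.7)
The plan is to combine the structural results of Section 5 with the complete classification of quadratic $7$-bit APN functions to turn the problem into a finite, enumerable search, and then to resolve the resulting candidates up to EA-equivalence using the ortho-derivative invariant.

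First, by Proposition~\ref{prop:high_linearity}, every quadratic $8$-bit function with linearity $2^7$ is EA-equivalent to one of the form $T$ built from a quadratic $G\colon\ftwo^7\to\ftwo^7$ together with linear maps $L\colon\ftwo^7\to\ftwo^7$ and $\ell\colon\ftwo^7\to\ftwo$, $\ell\neq 0$. By Theorem~\ref{thm:high_linearity_apn}, such $T$ is APN if and only if $G$ is APN and $\langle\pi_G(\alpha),L(\alpha)\rangle=1$ for every non-zero $\alpha\in\ker\ell$. Remark~\ref{rem:extendability_ea} allows $G$ to be replaced by any EA-equivalent representative, so I may let $G$ range over the $488$ known representatives $\mathcal{G}=\{G_1,\dots,G_{488}\}$ of the EA-equivalence classes of quadratic $7$-bit APN functions from~\cite{DBLP:journals/iacr/KalginI20a}.

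Next, for each such $G$ and each non-zero $\gamma\in\ftwo^7$ (representing $\ell\colon x\mapsto\langle\gamma,x\rangle$), I would compute the ortho-derivative $\pi_G$ once and then solve the affine system of $2^6-1$ linear equations in the $49$ unknown entries of $L$ that defines the set $\Gamma_{G,\ell}$. Proposition~\ref{prop:many_L} reduces the enumeration further: every $\Gamma$-equivalence class of $L$ within a non-empty $\Gamma_{G,\ell}$ produces EA-equivalent extensions, so I only need one representative per $\Gamma$-equivalence class, i.e.\ one representative per coset of the $2n$-dimensional space spanned by the $B_\mu$ and $\ell\nu$ inside $\Gamma_{G,\ell}$. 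This makes Algorithm~\ref{alg:classification} a fully finite procedure, and running it (as implemented in the appendix) outputs the complete list of candidates for quadratic $8$-bit APN functions with linearity $2^7$ up to EA-equivalence.

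The final step is to separate the candidates up to EA-equivalence. Here I would use the EA-invariant from~\cite{ortho_paper}: two quadratic APN functions with linear-equivalent ortho-derivatives cannot a priori be distinguished, but distinct extended Walsh spectra or distinct differential spectra of the ortho-derivatives certify EA-inequivalence. The computation above yields only four non-empty $\Gamma_{G,\ell}$, each contributing a single $\Gamma$-equivalence class, hence four candidate $8$-bit functions; comparing their ortho-derivative invariants to one another and to the four previously known EA-classes from~\cite{DBLP:journals/corr/abs-2009-07204} shows that the four candidates are pairwise EA-inequivalent and match the four known classes. Since Proposition~\ref{prop:high_linearity} guarantees completeness, this establishes that there are exactly four such EA-classes. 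The main obstacle is conceptual rather than combinatorial: ensuring that the reduction to the $488$ representatives and to $\Gamma$-equivalence loses no EA-class (handled by Remark~\ref{rem:extendability_ea} and Proposition~\ref{prop:many_L}); once this is secured, the remaining verification is a short computation.
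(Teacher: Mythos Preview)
Your proposal is correct and follows essentially the same approach as the paper: reduce via Proposition~\ref{prop:high_linearity}, Theorem~\ref{thm:high_linearity_apn}, and Remark~\ref{rem:extendability_ea} to the $488$ seven-bit representatives, compute each $\Gamma_{G,\ell}$ and take one $L$ per $\Gamma$-equivalence class (Proposition~\ref{prop:many_L}), and then verify that the four resulting functions are pairwise EA-inequivalent. The paper reports exactly this outcome (four non-empty $\Gamma_{G,\ell}$, each of size $2^{14}$, hence a single $\Gamma$-class each) and identifies the four outputs with the classes of~\cite{DBLP:journals/corr/abs-2009-07204}.
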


We give representatives of those four EA-equivalence classes in Section~\ref{sec:simpler} using a simpler representation.

\paragraph{Searching for 0-Extensions of 9-Bit and 11-Bit Quadratic APN Functions}  For all of the 60 known instances of 9-bit quadratic APN functions, we checked whether there exists a $\gamma \in \ftwo^9$ such that $\Gamma_{G,x \mapsto \langle \gamma,x \rangle}$ is non-empty. This is never the case. Thus, up to now, we do not know a quadratic 9-bit APN function that can be extended to a quadratic 10-bit APN function with linearity $2^9$. 

We also checked whether any of the known quadratic 11-bit APN functions coming from a known infinite family of APN functions (see the list in~\cite{apn-list-11}) is 0-extendable. In particular, we checked the quadratic monomial functions and the function $\mathbb{F}_{2^{11}} \rightarrow \mathbb{F}_{2^{11}}, x \mapsto x^3 + \tr{x^9}$ discovered in~\cite{DBLP:journals/ffa/BudaghyanCL09}. Also, none of those functions is 0-extendable.

\subsection{A Simpler Representation}
\label{sec:simpler}
As a summary of previous results, we can assume without loss of generality that quadratic APN functions with maximum linearity are of the following form.
\begin{theorem}
\label{thm:canonical_form}
Let $n \in \mathbb{N},n\geq 3$ and let $\gamma \in \ftwo^n \setminus \{0\}$ be an arbitrary non-zero element. Let $T \colon \ftwo^n \times \ftwo \rightarrow \ftwo^n \times \ftwo$ be a quadratic APN function with linearity $2^n$. Then, there exist a quadratic APN function $G \colon \ftwo^n \rightarrow \ftwo^n$ such that $\langle \pi_G(\alpha),\alpha \rangle = 1$ for all $\alpha \in \ftwo^n \setminus \{0\}$ with $\langle \gamma,\alpha \rangle = 0$. More precisely, $T$ is EA-equivalent to
\begin{eqnarray*}
\ftwo^n \times \ftwo &\to& \ftwo^n \times \ftwo \\
\left(\begin{array}{c}
     x  \\
     y 
\end{array}\right) &\mapsto& 
\left(\begin{array}{c}
     G(x)  \\
     0 
\end{array}\right)+
\left(\begin{array}{c}
     x  \\
     \langle \gamma, x \rangle
\end{array}\right)\cdot y\;.
\end{eqnarray*}
\end{theorem}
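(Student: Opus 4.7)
The plan is to combine the three main results already established: Proposition~\ref{prop:high_linearity}, Theorem~\ref{thm:high_linearity_apn}, and Proposition~\ref{prop:many_L}, and then normalize the resulting representation by two successive changes of coordinates. The starting point is Proposition~\ref{prop:high_linearity}, which gives that $T$ is EA-equivalent to some function of the form $(x,y)\mapsto (G_0(x),0) + (L_0(x),\ell_0(x))\cdot y$ with $G_0$ of degree at most $2$ and $\ell_0\neq 0$ linear. By Theorem~\ref{thm:high_linearity_apn}, the APN property of $T$ forces $G_0$ to be a quadratic APN function.

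The first obstacle is to make $L_0$ invertible, since afterwards we will want to apply $L_0^{-1}$ to the input. This is exactly what the rank argument at the end of the proof of Proposition~\ref{prop:many_L} provides: there exist $\mu,\nu\in\ftwo^n$ such that $L_1 \coloneqq L_0 + B_\mu + \ell_0\nu$ has full rank $n$, and the tuple $(G_0,0,L_1,\ell_0)$ yields an APN function EA-equivalent to $T$. Composing this representative with the input substitution $x\mapsto L_1^{-1}(x)$ then yields the EA-equivalent form
\begin{equation*}
 \begin{pmatrix} x \\ y \end{pmatrix} \mapsto \begin{pmatrix} G_1(x) \\ 0 \end{pmatrix} + \begin{pmatrix} x \\ \ell_1(x) \end{pmatrix}\cdot y,
\end{equation*}
where $G_1 \coloneqq G_0\circ L_1^{-1}$ is linear-equivalent to $G_0$ and hence still quadratic APN, and $\ell_1 \coloneqq \ell_0\circ L_1^{-1}$ is a nonzero linear form, say $\ell_1(x)=\langle\gamma_1,x\rangle$ with $\gamma_1\neq 0$.

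It remains to replace $\gamma_1$ by the prescribed $\gamma$. Since both are nonzero, choose $P\in\GL(n,\ftwo)$ with $P^{\top}\gamma_1 = \gamma$. Applying the input substitution $x\mapsto P(x)$ together with the block-diagonal output substitution $(u,v)\mapsto (P^{-1}(u),v)$ transforms the above representation into
\begin{equation*}
 \begin{pmatrix} x \\ y \end{pmatrix} \mapsto \begin{pmatrix} G(x) \\ 0 \end{pmatrix} + \begin{pmatrix} x \\ \langle \gamma,x\rangle \end{pmatrix}\cdot y,
\end{equation*}
where $G\coloneqq P^{-1}\circ G_1\circ P$ is linear-equivalent to $G_1$, hence a quadratic APN function. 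The input change converts $L=\mathrm{id}$ into $P$, which the output change then converts back into $\mathrm{id}$, while $\ell_1\circ P = \langle P^{\top}\gamma_1,\cdot\rangle = \langle\gamma,\cdot\rangle$, as desired.

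Finally, the property on the ortho-derivative is essentially free: the function just displayed is APN (being EA-equivalent to $T$), so Theorem~\ref{thm:high_linearity_apn} applied to $(G,0,\mathrm{id},\langle\gamma,\cdot\rangle)$ yields both that $G$ is quadratic APN and that $\langle \pi_G(\alpha),\alpha\rangle = 1$ for every $\alpha\in\ftwo^n\setminus\{0\}$ with $\langle\gamma,\alpha\rangle=0$. I expect the only subtle step to be the rank-normalization in Proposition~\ref{prop:many_L}, which, however, is already done in the paper; everything else is bookkeeping of linear changes of coordinates.
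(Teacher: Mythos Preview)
Your proposal is correct and follows essentially the same approach as the paper: invoke Proposition~\ref{prop:high_linearity} for the initial form, use Proposition~\ref{prop:many_L} to make the linear part invertible, and then normalize by linear changes of variables so that $L$ becomes the identity and the linear form becomes $\langle\gamma,\cdot\rangle$. The only cosmetic difference is that the paper performs the normalization in a single step (choosing $L'$ with $L'^{\top}\delta=\gamma$ and conjugating to obtain $G=(LL')^{-1}\circ Q\circ L'$), whereas you split it into two successive substitutions; your explicit final appeal to Theorem~\ref{thm:high_linearity_apn} to read off the ortho-derivative condition is a nice clarification that the paper leaves implicit.
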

\begin{proof}
By Theorem~\ref{thm:high_linearity_apn}, we know that there exist a quadratic APN function $Q \colon \ftwo^n \rightarrow \ftwo^n$, a linear mapping $L \colon \ftwo^n \rightarrow \ftwo^n$, and an element $\delta \in \ftwo^n \setminus \{0\}$ such that $(Q,0,L,x \mapsto \langle \delta,x \rangle)$ yields $T$. By Proposition~\ref{prop:many_L}, we can assume without loss of generality that $L$ is invertible (then $(Q,0,L,x \mapsto \langle \delta,x \rangle)$ does not necessarily yield $T$, but an APN function $T'$ EA-equivalent to $T$). We then have
\begin{eqnarray*}
T' \colon \ftwo^n \times \ftwo &\to& \ftwo^n \times \ftwo \\
\left(\begin{array}{c}
     x  \\
     y 
\end{array}\right) &\mapsto& 
\left(\begin{array}{c}
     Q(x)  \\
     0 
\end{array}\right)+
\left(\begin{array}{c}
     L(x)  \\
     \langle \delta, x \rangle
\end{array}\right)\cdot y\;.
\end{eqnarray*}
Let us choose an invertible linear mapping $L' \colon \ftwo^n \rightarrow \ftwo^n$ with $L'^{\top}(\delta) = \gamma$. It is straightforward to deduce that $T'$ is EA-equivalent to
\begin{eqnarray*}
\ftwo^n \times \ftwo &\to& \ftwo^n \times \ftwo \\
\left(\begin{array}{c}
     x  \\
     y 
\end{array}\right) &\mapsto& 
\left(\begin{array}{c}
     L'^{-1}L^{-1} \circ Q \circ L'(x)  \\
     0 
\end{array}\right)+
\left(\begin{array}{c}
     x  \\
     \langle \gamma, x \rangle
\end{array}\right)\cdot y\;.
\end{eqnarray*}
The result follows by defining $G \coloneqq (LL')^{-1} \circ Q \circ L'$.
\end{proof}

\begin{remark}
Let $g \in \F_{2^7}$ be an element with minimal polynomial $X^7 + X +1$. We can represent the four pairwise EA-inequivalent quadratic APN functions in dimension eight with maximum linearity by the four functions $T_8^{(1)},T_8^{(2)},T_8^{(3)},T_8^{(4)}$, where 
\begin{align*}
    T_8^{(i)} \colon \mathbb{F}_{2^7} \times \ftwo &\rightarrow \mathbb{F}_{2^7} \times \ftwo\\
    \left(\begin{array}{c}
     x  \\
     y 
\end{array}\right) &\mapsto 
\left(\begin{array}{c}
     G_i(x)  \\
     0 
\end{array}\right)+
\left(\begin{array}{c}
     x  \\
     \tr{x} 
\end{array}\right)\cdot y
\end{align*}
and where $G_i$ are the four 7-bit quadratic APN functions defined in univariate representation by
\begin{align*}
    G_1(x) &= g^{92}x^{96} + g^{50}x^{80} + g^{27}x^{72} + g^{28}x^{68} + x^{66} + g^{97}x^{65} + g^{60}x^{48} \\ &+ g^{88}x^{40} + g^{123}x^{36} + g^{43}x^{34} + g^{32}x^{33} + g^{26}x^{24} + g^{100}x^{20} + g^{115}x^{18} \\ &+ g^{85}x^{17} + g^{111}x^{12} + g^{28}x^{10} + g^{93}x^{9} + g^{113}x^{6} + g^{53}x^{5} + g^{10}x^{3}\\
    G_2(x) &= g^{68}x^{96} + g^{3}x^{80} + g^{58}x^{72} + g^{39}x^{68} + g^{43}x^{66} + g^{96}x^{65} + g^{118}x^{48} \\ &+ g^{102}x^{40} + g^{61}x^{36} + g^{69}x^{34} + g^{59}x^{33} + g^{110}x^{24} + g^{99}x^{20} + g^{53}x^{18} \\ &+ g^{63}x^{17} + g^{55}x^{12} + g^{98}x^{10} + g^{31}x^{9} + g^{57}x^{6} + g^{69}x^{5} + g^{87}x^{3} \\
    G_3(x) &= g^{71}x^{96} + g^{46}x^{80} + g^{15}x^{72} + g^{126}x^{68} + g^{44}x^{65} + g^{38}x^{48} \\ &+ g^{104}x^{40} + x^{36} + g^{73}x^{34} + g^{83}x^{33} + g^{38}x^{24} + g^{3}x^{20} + g^{120}x^{18} \\ &+ g^{34}x^{17} + g^{78}x^{12} + g^{108}x^{10} + g^{28}x^{9} + g^{113}x^{6} + g^{100}x^{5} + g^{70}x^{3}\\
    G_4(x) &= g^{71}x^{96} + g^{20}x^{80} + g^{125}x^{72} + g^{40}x^{68} + g^{71}x^{66} + g^{75}x^{65} + g^{113}x^{48} \\ &+ g^{100}x^{40} + g^{29}x^{36} + g^{62}x^{34} + g^{40}x^{33} + g^{97}x^{24} + g^{22}x^{20} + g^{111}x^{18} \\ &+ g^{106}x^{17} + g^{86}x^{12} + g^{29}x^{10} + gx^{9} + g^{64}x^{6} + g^{51}x^{5} + g^{16}x^{3}\;.
\end{align*}
\end{remark}

\paragraph{The Walsh Transform of Quadratic APN Functions with Maximum Linearity}
To deduce the Walsh spectrum of a quadratic APN function with maximum linearity, we need the following well-known result on the sum of fourth powers of the Walsh coefficients of APN functions. 

\begin{lemma}
\label{lem:4th_moments}{\cite{DBLP:conf/eurocrypt/ChabaudV94}}
Let $F \colon \ftwo^n \rightarrow \ftwo^n$. Then the following two statements are equivalent:
\begin{enumerate}
    \item $F$ is APN.
    \item $\sum_{a,b \in \ftwo^n, b \neq 0} \widehat{F}_b^4(a) = 2^{4n+1}-2^{3n+1}$.
\end{enumerate}
\end{lemma}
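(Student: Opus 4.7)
The plan is to prove both directions by computing $\sum_{a \in \ftwo^n, b \in \ftwo^n\setminus\{0\}} \widehat{F}_b^4(a)$ in two ways: once by expanding the fourth power and manipulating the exponential sum, and once by relating the resulting combinatorial count to the differential behavior of $F$.

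First, I would expand
\begin{equation*}
\widehat{F}_b^4(a) = \sum_{x_1,x_2,x_3,x_4 \in \ftwo^n} (-1)^{\langle a, x_1+x_2+x_3+x_4\rangle + \langle b, F(x_1)+F(x_2)+F(x_3)+F(x_4)\rangle}\;.
\end{equation*}
Summing over all $a \in \ftwo^n$ and using the standard orthogonality relation $\sum_a (-1)^{\langle a,v\rangle} = 2^n \mathbb{1}_{v = 0}$, this sum collapses to $2^n$ times a sum over quadruples satisfying $x_1+x_2+x_3+x_4 = 0$. Then summing over $b \in \ftwo^n \setminus \{0\}$ and using that $\sum_{b \neq 0} (-1)^{\langle b, S\rangle}$ equals $2^n - 1$ if $S = 0$ and $-1$ otherwise, I obtain
\begin{equation*}
\sum_{a, b\neq 0} \widehat{F}_b^4(a) = 2^n \bigl( 2^n N_0 - N \bigr)\;,
\end{equation*}
where $N = 2^{3n}$ is the number of quadruples with $x_1 + x_2 + x_3 + x_4 = 0$ (freely choose three components) and $N_0$ counts those also satisfying $F(x_1) + F(x_2) + F(x_3) + F(x_4) = 0$.

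Next I would analyze $N_0$. Substituting $x_2 = x_1 + a$, $x_4 = x_3 + a$ and parametrizing by $(x_1, x_3, a)$, the additional condition becomes $D_a F(x_1) = D_a F(x_3)$, where $D_a F$ is the derivative. For $a = 0$ this is automatic and contributes $2^{2n}$ quadruples. For each fixed $a \neq 0$, the set $\{x_1, x_1 + a\}$ always lies in the fiber of $D_a F$ containing $x_1$, so each fiber has size at least $2$ and the number of pairs $(x_1, x_3)$ solving $D_a F(x_1) = D_a F(x_3)$ is at least $2 \cdot 2^n$, with equality precisely when $D_a F$ is exactly $2$-to-$1$. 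Hence
\begin{equation*}
N_0 \geq 2^{2n} + (2^n - 1)\cdot 2^{n+1} = 3\cdot 2^{2n} - 2^{n+1}\;,
\end{equation*}
with equality if and only if $D_a F$ is $2$-to-$1$ for every nonzero $a$, i.e. if and only if $F$ is APN.

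Finally, plugging the extremal value of $N_0$ into $2^n(2^n N_0 - N)$ yields $2^{4n+1} - 2^{3n+1}$, giving the forward direction. For the converse, since $N_0$ cannot go below the bound, the equality $\sum_{a,b\neq 0}\widehat{F}_b^4(a) = 2^{4n+1} - 2^{3n+1}$ forces $N_0$ to attain the minimum, which in turn forces $F$ to be APN. The main subtlety, though essentially routine, is being careful in the combinatorial step to recognize that APN is exactly the condition for each nonzero derivative to be $2$-to-$1$ (equivalently, for each fiber of $D_a F$ to have size exactly two) so that the inequality $N_0 \geq 3\cdot 2^{2n} - 2^{n+1}$ becomes an equality.
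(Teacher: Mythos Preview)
Your proof is correct and follows the standard fourth-moment argument originally due to Chabaud and Vaudenay. Note, however, that the paper does not actually supply its own proof of this lemma: it simply cites \cite{DBLP:conf/eurocrypt/ChabaudV94} and uses the statement as a black box. So there is no ``paper proof'' to compare against; what you have written is essentially the classical proof from that reference, carried out cleanly (expand the fourth power, apply orthogonality in $a$ and then in $b$, reduce to counting quadruples, and identify the extremal case via the $2$-to-$1$ property of derivatives).
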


\begin{theorem}
Let $n \in \mathbb{N},n\geq 3$ be odd and let $F \colon \ftwo^{n+1} \rightarrow \ftwo^{n+1}$ be a quadratic APN function with linearity $2^n$. Then, $F$ consists of $2^{n-1}$ components whose Walsh transform only take values in $\{ 0, \pm 2^{\frac{n+3}{2}}\}$ (i.e., semi-bent components), $2^{n+1}-2-2^{n-1}$ components whose Walsh transform only take values in $\{\pm 2^{\frac{n+1}{2}} \}$ (i.e., bent components), and a single component with linearity $2^n$. 
\end{theorem}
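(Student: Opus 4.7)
The plan is to reduce, via Theorem~\ref{thm:canonical_form}, to the canonical form
\[ F(x,y) = \bigl(G(x),\,0\bigr) + \bigl(x,\,\langle\gamma,x\rangle\bigr)\,y, \]
where $G\colon\ftwo^n\to\ftwo^n$ is a quadratic APN function and $\gamma\in\ftwo^n\setminus\{0\}$. Since the extended Walsh spectrum is an EA-invariant, this representative has the same component-by-component Walsh profile as any other representative of the EA-class of $F$, so it suffices to analyze this $F$.

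Next, I would compute the Walsh transform of the component labelled by $(\beta,\delta)\in\ftwo^n\times\ftwo\setminus\{(0,0)\}$ by summing over $y$ first. Writing $\eta := \beta+\delta\gamma$, one gets $\widehat{F}_{(\beta,\delta)}(a,b)=2\,\widehat{G}_{\beta}(a)\,\mathbf{1}[b=0]$ when $\eta=0$ and $\widehat{F}_{(\beta,\delta)}(a,b)=\widehat{G}_{\beta}(a)+(-1)^{b}\,\widehat{G}_{\beta}(a+\eta)$ otherwise. Three cases can then be read off. If $(\beta,\delta)=(0,1)$, then $\widehat{G}_0(a)=2^n\mathbf{1}[a=0]$, so $|\widehat{F}_{(0,1)}|$ attains $2^n$ at exactly four points (namely $(0,0),(0,1),(\gamma,0),(\gamma,1)$) and vanishes elsewhere; this is the unique component with linearity $2^n$. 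If $(\beta,\delta)=(\gamma,1)$, which is the only other way to have $\eta=0$, then the almost-bentness of $G$ (valid because $n$ is odd, see~\cite{DBLP:journals/dcc/CarletCZ98}) yields Walsh values in $\{0,\pm 2^{(n+3)/2}\}$, i.e., a semi-bent component. In all remaining cases, both $\beta$ and $\eta$ are nonzero, hence almost-bentness of $G$ gives the pointwise bound $|\widehat{F}_{(\beta,\delta)}(a,b)|\leq 2^{(n+3)/2}$. Combined with the general fact, recalled in Section~2, that the nonzero Walsh values of a quadratic Boolean function on the even dimension $n+1$ all equal $\pm 2^{(n+1+k)/2}$ for a single even $k\geq 0$, this forces each such component to be either bent ($k=0$) or semi-bent ($k=2$).

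Let $s$ and $b'$ denote the number of semi-bent and bent components respectively, so $s+b'=2^{n+1}-2$. I would then apply the sum-of-fourth-powers identity from Lemma~\ref{lem:4th_moments} to the APN function $F$. A short Parseval computation pins down the per-type contribution: a bent component contributes $2^{n+1}\cdot 2^{2(n+1)}=2^{3n+3}$, a semi-bent component contributes $2^{n-1}\cdot 2^{2(n+3)}=2^{3n+5}$ (since a semi-bent component on $n+1$ variables has exactly $2^{n-1}$ nonzero Walsh values by Parseval), and the linearity-$2^n$ component contributes $4\cdot 2^{4n}=2^{4n+2}$. Setting the total equal to $2^{4(n+1)+1}-2^{3(n+1)+1}=2^{4n+5}-2^{3n+4}$ and solving the resulting linear equation in $s$ gives $s=2^{n-1}$, and therefore $b'=2^{n+1}-2-2^{n-1}$.

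The principal obstacle is the third case: one must confirm that each component there really does have all its nonzero Walsh values concentrated at a single absolute value, rather than at a mixture of scales or at an even smaller power. This rests on the general structure theorem for Walsh transforms of quadratic Boolean functions, together with the upper bound derived from the explicit formula for $\widehat{F}_{(\beta,\delta)}$. Once that dichotomy is secured, the Chabaud--Vaudenay identity does the rest of the counting mechanically.
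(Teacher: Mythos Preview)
Your proposal is correct and follows essentially the same route as the paper: reduce to the canonical form of Theorem~\ref{thm:canonical_form}, express $\widehat{F}_{(\beta,\delta)}$ in terms of $\widehat{G}_\beta$, use almost-bentness of $G$ (odd $n$) together with the quadratic structure theorem to force each component with $\beta\neq 0$ to be bent or semi-bent, and then solve for the number of semi-bent components via Lemma~\ref{lem:4th_moments} and Parseval. Your introduction of $\eta=\beta+\delta\gamma$ and the explicit identification of $(\gamma,1)$ as a guaranteed semi-bent component are minor elaborations not present in the paper, but the argument is the same in substance.
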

\begin{proof}
Let $T \colon \ftwo^n \times \ftwo \rightarrow \ftwo^n \times \ftwo$ be a function EA-equivalent to $F$ and in the form as given in Theorem~\ref{thm:canonical_form}. For $a,b \in \ftwo^n$ and $\bar{a},\bar{b} \in \ftwo$, it is straightforward to deduce that
\begin{align*}
    \widehat{T}_{(b,\bar{b})}(a,\bar{a}) = \begin{cases} \widehat{G}_b(a) + (-1)^{\bar{a}}\widehat{G}_b(a+b) & \text{ if } \bar{b}=0\\
    \widehat{G}_b(a) + (-1)^{\bar{a}}\widehat{G}_b(a+b+\gamma) & \text{ if } \bar{b}=1\;.
    \end{cases}
\end{align*}

If $b = 0$, we only have one component $(b,\bar{b}) = (0,1)$. In this case, we obtain $\widehat{T}_{(0,\bar{1})}(a,\bar{a}) = \widehat{G}_0(a) + (-1)^{\bar{a}}\widehat{G}_0(a+\gamma)$, which evaluates to 0 if and only if $(a,\bar{a}) \notin \{(0,0),(0,1),(\gamma,0),(\gamma,1) \}$. Otherwise, $\widehat{T}_{(0,\bar{1})}(a,\bar{a})$ evaluates to $\pm 2^{n}$.

Let us now consider the case $b \neq 0$. Since $G$ is a quadratic APN function in odd dimension $n$, it must be almost bent. In other words, its component functions are all semi-bent, i.e., for all $a \in \ftwo^n$, we have $\widehat{G}_b(a) \in \{0,\pm 2^{\frac{n+1}{2}}\}$. Therefore, each component function $(b,\bar{b})$ of $T$ with $b \neq 0$ is either bent (i.e., $\forall (a,\bar{a}) \in \ftwo^n \times \ftwo \colon \widehat{T}_{(b,\bar{b})}(a,\bar{a}) \in \{\pm 2^{\frac{n+1}{2}}\}$) or semi-bent (i.e., $\forall (a,\bar{a}) \in \ftwo^n \times \ftwo \colon \widehat{T}_{(b,\bar{b})}(a,\bar{a}) \in \{0,\pm 2^{\frac{n+3}{2}}\}$).

Let $k$ denote the number of semi-bent components of $F$ and let the sets $A,B,C$ be defined as follows:
\begin{align*}
    A &\coloneqq \{(a,b) \in \ftwo^{n+1} \times \ftwo^{n+1} \setminus \{0\} \mid \lvert\widehat{F}_{b}(a)\rvert = 2^n\} \\
    B &\coloneqq \{(a,b) \in \ftwo^{n+1} \times \ftwo^{n+1} \setminus \{0\} \mid \lvert\widehat{F}_{b}(a)\rvert = 2^{\frac{n+1}{2}}\} \\
    C &\coloneqq \{(a,b) \in \ftwo^{n+1} \times \ftwo^{n+1} \setminus \{0\} \mid \lvert\widehat{F}_{b}(a)\rvert = 2^{\frac{n+3}{2}}\}\;. \\
\end{align*}
From the previous observations, the number of bent components of $F$ is equal to $2^{n+1}-2-k$ and we have $\lvert A\rvert = 4$, $\lvert B\rvert = (2^{n+1}-2-k)2^{n+1}$, and $\lvert C\rvert = k2^{n-1}$. The cardinalities of the sets $\lvert B\rvert$ and $\lvert C\rvert$ stated as above follow from Parseval's relation for the Walsh transform, i.e., for all $b \neq 0$, we have $\sum_{a \in \ftwo^{n+1}}\widehat{F}_b^2(a) = 2^{2n+2}$ (see~\cite[p.\@ 61]{carlet_2021}).
We thus have
\begin{align*}
    \sum_{a,b \in \ftwo^{n+1}, b \neq 0} \widehat{F}_b^4(a) &= \lvert A\rvert 2^{4n} + \lvert B\rvert 2^{2n+2} + \lvert C\rvert 2^{2n+6} \\
    &= k(2^{3n+5}-2^{3n+3}) + 2^{4n+2} + 2^{4n+4} - 2^{3n+4}\;,
\end{align*}
which must be equal to $2^{4n+5}-2^{3n+4}$ according to Lemma~\ref{lem:4th_moments}. It follows that $k = 2^{n-1}$.
\end{proof}

\paragraph{On the Ortho-derivative of APN Functions with Maximum Linearity}
If $T \colon \ftwo^n \times \ftwo \rightarrow \ftwo^n \times \ftwo$ is a quadratic APN function with linearity $2^n$, then there exists an $(n-1)$-dimensional linear space $V$ such that the ortho-derivative $\pi_T$ of $T$ is constant on $V \setminus \{0\}$. Indeed, for any fixed non-zero $\gamma \in \ftwo^n$, the APN function $T$ is EA-equivalent to a function $T'$ of the form as in Theorem~\ref{thm:canonical_form}, so for all $(x,y) \in \ftwo^n \times \ftwo$, we have that $\pi_{T'}(\alpha,\beta) \neq 0$ is orthogonal to
\begin{equation*}
    \left(\begin{array}{c}
     G(x) + G(x+\alpha) + G(\alpha) + G(0)  \\
     0 
\end{array}\right) + \beta \left(\begin{array}{c}
     x  \\
     \langle \gamma, x \rangle 
\end{array}\right) + y \left(\begin{array}{c}
     \alpha  \\
     \langle \gamma,\alpha \rangle 
\end{array}\right)
\end{equation*}
as long as $(\alpha,\beta) \in (\ftwo^n \times \ftwo) \setminus \{0\}$. Setting $V' \coloneqq \{(\alpha,\beta) \in \ftwo^n \times \ftwo \mid \beta = 0 \text{ and } \langle \gamma, \alpha \rangle = 0 \}$, we can observe that $\pi_{T'}(\alpha,\beta) = ((0,0,\dots,0),1)$ for all $(\alpha,\beta) \in V' \setminus \{0\}$. We recall that the ortho-derivatives of EA-equivalent functions are linear-equivalent~\cite[Prop.\@ 39]{ortho_paper}.

\subsection{The Case of Gold Functions}
\label{sec:gold}
It is worth highlighting that an EA-equivalent representation of the 6-bit quadratic APN function with linearity $2^5$, i.e., Function no.\@ 2.6 in~\cite[Table 5]{DBLP:journals/amco/EdelP09}, can be given as
\begin{align*}
    T_6 \colon \mathbb{F}_{2^5} \times \ftwo &\rightarrow \mathbb{F}_{2^5} \times \ftwo\\
    \left(\begin{array}{c}
     x  \\
     y 
\end{array}\right) &\mapsto 
\left(\begin{array}{c}
     x^3  \\
     0 
\end{array}\right)+
\left(\begin{array}{c}
     x^{16}+x  \\
     \tr{x} 
\end{array}\right)\cdot y\;,
\end{align*}
i.e., it can be obtained as a 0-extension of the cube function over $\mathbb{F}_{2^5}$. Note that it was already observed in~\cite[Sec.\@ 3]{DBLP:journals/iacr/KalginI20a} that both of the two quadratic EA-equivalence classes of 5-bit APN functions yield the EA-equivalence class of the 6-bit quadratic APN function with maximum linearity as a 0-extension. The following proposition gives us a necessary and sufficient condition on when a Gold APN function in odd dimension can be extended to a quadratic APN function with maximum linearity.
\begin{proposition}
Let $n \in \mathbb{N}$ be an odd integer and let $i \in \mathbb{N}$ be an integer with $\gcd(i,n)=1$. The APN function $G \colon \mathbb{F}_{2^n} \rightarrow \mathbb{F}_{2^n}, x \mapsto x^{2^i+1}$ is 0-extendable if and only if there exists a linear function (bijection) $L \colon \mathbb{F}_{2^n} \rightarrow \mathbb{F}_{2^n}$ such that, for all $x \in \mathbb{F}_{2^n}\setminus\{0\}$ with $\tr{ x}=0$, we have $\tr{x^{-(2^i+1)}L(x)} = 1$.
\end{proposition}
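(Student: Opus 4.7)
The plan is to translate the abstract characterization of $0$-extendability from Theorem~\ref{thm:high_linearity_apn} into the concrete condition of the proposition by (i) computing the ortho-derivative of the Gold function explicitly, (ii) normalizing the linear form $\ell$ from the theorem to the absolute trace via a change of variables, and (iii) invoking Proposition~\ref{prop:many_L} to ensure $L$ can be chosen as a bijection.

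First I would compute $\pi_G$ for $G(x) = x^{2^i+1}$. Using the Frobenius identity, $B_\alpha(x) = G(x) + G(x+\alpha) + G(\alpha) + G(0) = \alpha x^{2^i} + \alpha^{2^i} x$. For $\alpha \neq 0$, using $\tr{y^{2^i}} = \tr{y}$, I would verify
\[
    \tr{\alpha^{-(2^i+1)} \bigl( \alpha x^{2^i} + \alpha^{2^i} x \bigr)} = \tr{(\alpha^{-1} x)^{2^i}} + \tr{\alpha^{-1} x} = 0
\]
for all $x \in \mathbb{F}_{2^n}$, so $\alpha^{-(2^i+1)}$ lies in the orthogonal complement of $\img{B_\alpha}$. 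Since $G$ is APN and $\gcd(i,n)=1$, the kernel of $B_\alpha$ is exactly $\{0,\alpha\}$, forcing $\dim \img{B_\alpha} = n-1$ and hence $\pi_G(\alpha) = \alpha^{-(2^i+1)}$ is the unique non-zero element of the orthogonal complement.

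Next, by Theorem~\ref{thm:high_linearity_apn}, $G$ is $0$-extendable iff there exist a linear $L \colon \mathbb{F}_{2^n} \to \mathbb{F}_{2^n}$ and a non-zero linear $\ell \colon \mathbb{F}_{2^n} \to \ftwo$ such that $\tr{\alpha^{-(2^i+1)} L(\alpha)} = 1$ for all $\alpha \neq 0$ with $\ell(\alpha) = 0$. The backward direction of the proposition then follows immediately by taking $\ell(\alpha) = \tr{\alpha}$. For the forward direction, I would write the given $\ell$ as $\ell(\alpha) = \tr{c\alpha}$ for some $c \in \mathbb{F}_{2^n}^*$ and substitute $\alpha = \beta/c$. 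Since $\pi_G(\beta/c) = c^{2^i+1}\beta^{-(2^i+1)}$ and $\ell(\beta/c) = \tr{\beta}$, the condition transforms into $\tr{\beta^{-(2^i+1)} \widetilde{L}(\beta)} = 1$ for all $\beta \neq 0$ with $\tr{\beta} = 0$, where $\widetilde{L}(\beta) \coloneqq c^{2^i+1} L(\beta/c)$ is linear.

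The remaining task is to guarantee $\widetilde{L}$ is a bijection; here I would use Proposition~\ref{prop:many_L}. Starting from any $L$ such that $(G,0,L,\ell)$ yields an APN function, I can replace $L$ with $L + B_\mu + \ell\nu$, which preserves the APN property of the extension (and hence condition 2 of Theorem~\ref{thm:high_linearity_apn}); for every $\mu$, there are $2^{n-1}$ choices of $\nu$ making the modified map full rank. Applying this adjustment before the change of variable ensures $\widetilde{L}$ is a bijection. I expect the explicit computation of $\pi_G$ to be the main technical step, with everything else being a matter of bookkeeping on top of Theorem~\ref{thm:high_linearity_apn} and Proposition~\ref{prop:many_L}.
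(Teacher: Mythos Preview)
Your proposal is correct and follows essentially the same approach as the paper: compute $\pi_G(\alpha)=\alpha^{-(2^i+1)}$ directly from $B_\alpha(x)=\alpha x^{2^i}+\alpha^{2^i}x$, apply the characterization from Theorem~\ref{thm:high_linearity_apn}, normalize the linear form to the absolute trace via the substitution $\alpha\mapsto \alpha/c$ (the paper uses $x\mapsto \beta^{-1}x$), and invoke Proposition~\ref{prop:many_L} for the bijectivity of $L$. Your write-up is arguably a bit cleaner in that you verify $\tr{\pi_G(\alpha)B_\alpha(x)}=0$ with the roles of $\alpha$ and $x$ placed as in the definition of the ortho-derivative, whereas the paper relies (implicitly) on the symmetry $B_\alpha(x)=B_x(\alpha)$.
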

\begin{proof}
In the finite field $\mathbb{F}_{2^n}$, we use $\langle \alpha,x \rangle_{\mathbb{F}_{2^n}} = \tr{\alpha x}$. Then, for the ortho-derivative of $G$, we have \begin{align*}\pi_G \colon \mathbb{F}_{2^n} \rightarrow \mathbb{F}_{2^n}, x \mapsto \begin{cases}x^{-(2^i+1)} &\text{ if } x \neq 0  \\ 0 &\text{ if } x = 0 \end{cases}\;.\end{align*}
Indeed, for all $\alpha,x \in \mathbb{F}_{2^n}$, we have $B_{\alpha}(x) \coloneqq G(x) + G(x+\alpha) + G(\alpha) + G(0) = \alpha x^{2^i} + \alpha^{2^i} x$ and thus, $\tr{x^{-(2^i+1)}B_{\alpha}(x)} = \tr{\alpha x^{-1}} + \tr{\alpha^{2^i}x^{-2^i}} = 0$.

Therefore, $G$ is 0-extendable if and only if there exists a linear function $L' \colon \mathbb{F}_{2^n} \rightarrow \mathbb{F}_{2^n}$ and a non-zero $\beta \in \mathbb{F}_{2^n}$ such that, for all $x \in \mathbb{F}_{2^n}\setminus\{0\}$ with $\tr{\beta x}=0$, we have $\tr{x^{-(2^i+1)}L'(x)} = 1$. If we substitute $x$ by $\beta^{-1} x$, we obtain that, for all $x \in \mathbb{F}_{2^n}\setminus\{0\}$ with $\tr{x}=0$, we have $\tr{x^{-(2^i+1)}\beta^{2^i+1}L'(\beta^{-1}x)} = 1$. The result follows by defining $L \colon \mathbb{F}_{2^n} \rightarrow \mathbb{F}_{2^n}, x \mapsto \beta^{2^i+1}L'(\beta^{-1}x)$.

Without loss of generality, we can assume $L$ being a bijection because of Proposition~\ref{prop:many_L}.
\end{proof}

We did not find any example of a 0-extendable Gold function in odd dimension $n$ with $7 \leq n \leq 15$.

\section{Open Problems}
Our work leaves several open problems, which we list in the following. We expect that a solution to any of those problems will provide further interesting insights within the theory of APN functions.

\begin{openproblem}
Study how restrictions of a function $F \colon \ftwo^n \rightarrow \ftwo^n$ are related to restrictions of a function $F'$ CCZ-equivalent to $F$. Moreover, if we have two quadratic APN functions $F$ and $G$ such that $G \prec F$, determine whether it is possible to construct an APN function $G'$ which is not CCZ-equivalent to a quadratic function such that $G' \prec F'$, where $F'$ is CCZ-equivalent to $F$.
\end{openproblem}

\begin{openproblem}
Find a recursive APN function in every dimension $n \geq 2$ (i.e., prove Conjecture~\ref{con:recursive}).
\end{openproblem}

\begin{openproblem}
Find an algorithmic way for finding valid pairs $(L,\ell)$ when building $r$-extendable APN functions that is more efficient than the method described in Section~\ref{sec:r-extendable}.
\end{openproblem}

\begin{openproblem}
Construct an infinite family of quadratic APN functions with maximum linearity (or equivalently an infinite family of 0-extendable APN functions) or prove that such a family cannot exist.
\end{openproblem}
  
\begin{openproblem}
Determine whether there exists a 0-extendable APN function in odd dimension $n>5$ that comes from one of the known infinite families of APN functions.
\end{openproblem}

\subsubsection*{Acknowledgments}
We thank the anonymous reviewers for their detailed and helpful comments.

This  work  was  funded  by Deutsche  Forschungsgemeinschaft  (DFG);  project  number 411879806 and by DFG under Germany's Excellence Strategy - EXC 2092 CASA - 390781972.

\appendix
\section{An Example of a Recursive APN Function in Dimension Eight}
\label{app:recursive}

Below, we provide the look-up table of a quadratic recursive APN function $R \colon \ftwo^8 \rightarrow \ftwo^8$. The entries of the look-up table are denoted as hexadecimal values, omitting the \texttt{0x} prefix for presentation purposes.
\begin{align*}
    R = [
    &\texttt{00},\texttt{79},\texttt{b2},\texttt{e1},\texttt{39},\texttt{c7},\texttt{70},\texttt{a4},\texttt{36},\texttt{c0},\texttt{22},\texttt{fe},\texttt{5e},\texttt{2f},\texttt{b1},\texttt{ea},\\
    &\texttt{b9},\texttt{f8},\texttt{1d},\texttt{76},\texttt{28},\texttt{ee},\texttt{77},\texttt{9b},\texttt{0a},\texttt{c4},\texttt{08},\texttt{ec},\texttt{ca},\texttt{83},\texttt{33},\texttt{50},\\
    &\texttt{1e},\texttt{1d},\texttt{70},\texttt{59},\texttt{b5},\texttt{31},\texttt{20},\texttt{8e},\texttt{58},\texttt{d4},\texttt{90},\texttt{36},\texttt{a2},\texttt{a9},\texttt{91},\texttt{b0},\\
    &\texttt{8d},\texttt{b6},\texttt{f5},\texttt{e4},\texttt{8e},\texttt{32},\texttt{0d},\texttt{9b},\texttt{4e},\texttt{fa},\texttt{90},\texttt{0e},\texttt{1c},\texttt{2f},\texttt{39},\texttt{20},\\
    &\texttt{8e},\texttt{26},\texttt{1f},\texttt{9d},\texttt{ba},\texttt{95},\texttt{d0},\texttt{d5},\texttt{a6},\texttt{81},\texttt{91},\texttt{9c},\texttt{c3},\texttt{63},\texttt{0f},\texttt{85},\\
    &\texttt{fc},\texttt{6c},\texttt{7b},\texttt{c1},\texttt{60},\texttt{77},\texttt{1c},\texttt{21},\texttt{51},\texttt{4e},\texttt{70},\texttt{45},\texttt{9c},\texttt{04},\texttt{46},\texttt{f4},\\
    &\texttt{2f},\texttt{fd},\texttt{62},\texttt{9a},\texttt{89},\texttt{dc},\texttt{3f},\texttt{40},\texttt{77},\texttt{2a},\texttt{9c},\texttt{eb},\texttt{80},\texttt{5a},\texttt{90},\texttt{60},\\
    &\texttt{77},\texttt{9d},\texttt{2c},\texttt{ec},\texttt{79},\texttt{14},\texttt{d9},\texttt{9e},\texttt{aa},\texttt{cf},\texttt{57},\texttt{18},\texttt{f5},\texttt{17},\texttt{f3},\texttt{3b},\\
    &\texttt{46},\texttt{bf},\texttt{d8},\texttt{0b},\texttt{0b},\texttt{75},\texttt{6e},\texttt{3a},\texttt{9c},\texttt{ea},\texttt{a4},\texttt{f8},\texttt{80},\texttt{71},\texttt{43},\texttt{98},\\
    &\texttt{eb},\texttt{2a},\texttt{63},\texttt{88},\texttt{0e},\texttt{48},\texttt{7d},\texttt{11},\texttt{b4},\texttt{fa},\texttt{9a},\texttt{fe},\texttt{00},\texttt{c9},\texttt{d5},\texttt{36},\\
    &\texttt{ef},\texttt{6c},\texttt{ad},\texttt{04},\texttt{30},\texttt{34},\texttt{89},\texttt{a7},\texttt{45},\texttt{49},\texttt{a1},\texttt{87},\texttt{cb},\texttt{40},\texttt{d4},\texttt{75},\\
    &\texttt{68},\texttt{d3},\texttt{3c},\texttt{ad},\texttt{1f},\texttt{23},\texttt{b0},\texttt{a6},\texttt{47},\texttt{73},\texttt{b5},\texttt{ab},\texttt{61},\texttt{d2},\texttt{68},\texttt{f1},\\
    &\texttt{d1},\texttt{f9},\texttt{6c},\texttt{6e},\texttt{91},\texttt{3e},\texttt{d7},\texttt{52},\texttt{15},\texttt{b2},\texttt{0e},\texttt{83},\texttt{04},\texttt{24},\texttt{e4},\texttt{ee},\\
    &\texttt{b7},\texttt{a7},\texttt{1c},\texttt{26},\texttt{5f},\texttt{c8},\texttt{0f},\texttt{b2},\texttt{f6},\texttt{69},\texttt{fb},\texttt{4e},\texttt{4f},\texttt{57},\texttt{b9},\texttt{8b},\\
    &\texttt{c7},\texttt{95},\texttt{a6},\texttt{de},\texttt{15},\texttt{c0},\texttt{8f},\texttt{70},\texttt{73},\texttt{ae},\texttt{b4},\texttt{43},\texttt{f0},\texttt{aa},\texttt{cc},\texttt{bc},\\
    &\texttt{8b},\texttt{e1},\texttt{fc},\texttt{bc},\texttt{f1},\texttt{1c},\texttt{7d},\texttt{ba},\texttt{ba},\texttt{5f},\texttt{6b},\texttt{a4},\texttt{91},\texttt{f3},\texttt{bb},\texttt{f3}
    ]
\end{align*}

Let $g \in \F_{2^8}$ be an element with minimal polynomial $X^8 + X^4 + X^3 + X^2 + 1$. Then, $R$ can also be given as a function over $\F_{2^8}$ by the univariate representation
\begin{equation*}
    \begin{split}
    R(x) &= 
    g^{157} x + g^{237} x^{2} + g^{169} x^{3} + g^{56} x^{4} + g^{43} x^{5} + g^{11} x^{6} + g^{154} x^{8} + g^{89} x^{9} \\ &+ g^{155} x^{10} + g^{221} x^{12} + g^{157} x^{16} + g^{5} x^{17} + g^{245} x^{18} + g^{32} x^{20} + g^{127} x^{24} \\ &+ g^{49} x^{32} + g^{81} x^{33} + g^{4} x^{34} + g^{146} x^{36} + g^{223} x^{40} + g^{44} x^{48} + g^{70} x^{64} \\ &+ g^{127} x^{65} + g^{113} x^{66} + g^{52} x^{68} + g^{253} x^{72} + g^{209} x^{80} + g^{239} x^{96} + g^{43} x^{128} \\ &+ g^{4} x^{129} + g^{99} x^{130} + g^{89} x^{132} + g^{26} x^{136} + g^{47} x^{144} + g^{220} x^{160} + g^{253} x^{192}.
\end{split}
\end{equation*}

The following sage code returns $R$ and APN restrictions contained in $R$ for $n=7,\dots,2$. 
\begin{lstlisting}[frame=single, basicstyle=\small, tabsize=2, language=python]
n = 8
while (n>=2):
    mask = sum(int(1 << i) for i in xrange(0,n))
    R = [R[x] & mask for x in xrange(0, 2**n)] 
    print(R)
    n = n-1
\end{lstlisting}

\section{Sage Implementation of Algorithm~\ref{alg:classification}}
\label{app:sage}
\begin{lstlisting}[frame=single, basicstyle=\small, tabsize=2, language=python]
from sboxU import *
load("sboxU/known_functions/sevenBitAPN.py")
n = 7

AllQuadG = all_quadratics()

def bit(a,i):
	return ((a>>i)%2)

def inner_product(l,x):
	return Integer(l&x).bits().count(1)%2

# converts the matrix A to a look-up table
def matrix_to_sbox(A):
	T = []
	V = VectorSpace(GF(2),A.nrows())
	for v in V:
		T.append(ZZ(list(A*v),base=2))
	return T

def gen_system(G,ortho,l):
	M = []
	for a in range(2**n)[1::]:
		if (inner_product(l,a)==0):
			row = []
			for bit_pi in range(n):
				for bit_a in range(n):
					row.append(bit(a,bit_a)*bit(ortho[a],bit_pi))
			M.append(row)
	return (matrix(GF(2),M))

def is_extendable(G,ortho,l):
	M = gen_system(G,ortho,l)
	v = vector(GF(2),[1]*(M.nrows()))
	try:
		mat = M.solve_right(v)
	except (ValueError):
		return []
	if (M.right_kernel().dimension()>(2*n)):
		print("There might be more solutions!\n")
	#otherwise there is only on L up to Gamma-equivalence
	L = matrix_to_sbox(matrix(GF(2),n,n,list(mat)))		
	S = []
	for i in range(2**n):
		S.append(int(G[i]<<1))
	for i in range(2**n):
		S.append(int(((G[i]^^L[i])<<1)^^inner_product(l,i)))
	return S

# returns a list of APN extensions of G
def extensions(G):
	sol = []
	pi = ortho_derivative(G)
	for l in range(2**n):
		t = is_extendable(G,pi,l)
		if (not(t==[])):
			sol.append(t)
	return sol
	
sol = []
for i in range(len(AllQuadG)):
	print(i)
	t = extensions(AllQuadG[i])
	if (not(t==[])):
		sol.append(t)
		
print(sol)
\end{lstlisting}


\begin{thebibliography}{10}

\bibitem{DBLP:journals/tit/BeierleBL21}
C.~Beierle, M.~Brinkmann, and G.~Leander.
\newblock Linearly self-equivalent {APN} permutations in small dimension.
\newblock {\em {IEEE} Trans. Inf. Theory}, 67(7):4863--4875, 2021.

\bibitem{DBLP:journals/corr/abs-2009-07204}
C.~Beierle and G.~Leander.
\newblock New instances of quadratic {APN} functions.
\newblock {\em {IEEE} Trans. Inf. Theory}, 68(1):670--678, 2022.

\bibitem{our_functions}
C.~Beierle, G.~Leander, and L.~Perrin.
\newblock Quadratic apn extensions, 2022.
\newblock Dataset (Version 2.0).

\bibitem{trims_code}
C.~Beierle, G.~Leander, and L.~Perrin.
\newblock Supplementary code to "trims and extensions of quadratic apn
  functions", 2022.
\newblock Software (Version v1).

\bibitem{apn-list-11}
{Boolean Function Wiki}.
\newblock {CCZ}-inequivalent representatives from the known {APN} families for
  dimensions up to 11.
\newblock
  \url{https://boolean.h.uib.no/mediawiki/index.php/CCZ-inequivalent_representatives_from_the_known_APN_families_for_dimensions_up_to_11},
  2021.
\newblock accessed August 30, 2021.

\bibitem{DBLP:journals/dcc/BrinkmannL08}
M.~Brinkmann and G.~Leander.
\newblock On the classification of {APN} functions up to dimension five.
\newblock {\em Des. Codes Cryptogr.}, 49(1-3):273--288, 2008.

\bibitem{DBLP:journals/tit/BudaghyanCCCV20}
L.~Budaghyan, M.~Calderini, C.~Carlet, R.~S. Coulter, and I.~Villa.
\newblock Constructing {APN} functions through isotopic shifts.
\newblock {\em {IEEE} Trans. Inf. Theory}, 66(8):5299--5309, 2020.

\bibitem{budaghyan2020generalized}
L.~Budaghyan, M.~Calderini, C.~Carlet, R.~S. Coulter, and I.~Villa.
\newblock Generalized isotopic shift construction for {APN} functions.
\newblock {\em Des. Codes Cryptogr.}, 89(1):19--32, 2021.

\bibitem{DBLP:journals/ffa/BudaghyanCL09}
L.~Budaghyan, C.~Carlet, and G.~Leander.
\newblock Constructing new {APN} functions from known ones.
\newblock {\em Finite Fields Their Appl.}, 15(2):150--159, 2009.

\bibitem{budaghyan2009construction}
L.~Budaghyan, C.~Carlet, and G.~Leander.
\newblock On a construction of quadratic apn functions.
\newblock In {\em 2009 IEEE Information Theory Workshop}, pages 374--378. IEEE,
  2009.

\bibitem{DBLP:journals/ccds/Calderini20}
M.~Calderini.
\newblock On the {EA}-classes of known {APN} functions in small dimensions.
\newblock {\em Cryptogr. Commun.}, 12(5):821--840, 2020.

\bibitem{ortho_paper}
A.~Canteaut, A.~Couvreur, and L.~Perrin.
\newblock Recovering or testing extended-affine equivalence.
\newblock {\em CoRR}, abs/2103.00078, 2021.

\bibitem{DBLP:journals/tit/Carlet18}
C.~Carlet.
\newblock Characterizations of the differential uniformity of vectorial
  functions by the {Walsh} transform.
\newblock {\em {IEEE} Trans. Inf. Theory}, 64(9):6443--6453, 2018.

\bibitem{carlet_2021}
C.~Carlet.
\newblock {\em Boolean Functions for Cryptography and Coding Theory}.
\newblock Cambridge University Press, Cambridge, 2021.

\bibitem{DBLP:journals/dcc/CarletCZ98}
C.~Carlet, P.~Charpin, and V.~A. Zinoviev.
\newblock Codes, bent functions and permutations suitable for {DES}-like
  cryptosystems.
\newblock {\em Des. Codes Cryptogr.}, 15(2):125--156, 1998.

\bibitem{DBLP:conf/eurocrypt/ChabaudV94}
F.~Chabaud and S.~Vaudenay.
\newblock Links between differential and linear cryptanalysis.
\newblock In A.~D. Santis, editor, {\em Advances in Cryptology - {EUROCRYPT}
  '94, Proceedings}, volume 950 of {\em LNCS}, pages 356--365, Berlin,
  Heidelberg, 1994. Springer.

\bibitem{DBLP:journals/amco/EdelP09}
Y.~Edel and A.~Pott.
\newblock A new almost perfect nonlinear function which is not quadratic.
\newblock {\em Adv. Math. Commun.}, 3(1):59--81, 2009.

\bibitem{Gorodilova+2016+193+202}
A.~A. Gorodilova.
\newblock Characterization of almost perfect nonlinear functions in terms of
  subfunctions.
\newblock {\em Discrete Mathematics and Applications}, 26(4):193--202, 2016.

\bibitem{DBLP:journals/ccds/Idrisova19}
V.~Idrisova.
\newblock On an algorithm generating 2-to-1 {APN} functions and its
  applications to "the big {APN} problem".
\newblock {\em Cryptogr. Commun.}, 11(1):21--39, 2019.

\bibitem{DBLP:journals/iacr/KalginI20a}
K.~Kalgin and V.~Idrisova.
\newblock The classification of quadratic {APN} functions in 7 variables.
\newblock {\em {IACR} Cryptol. ePrint Arch.}, 2020:1515, 2020.

\bibitem{langevin}
P.~Langevin.
\newblock Classification of {APN} cubics in dimension 6 over {GF(2)}.
\newblock \url{http://langevin.univ-tln.fr/project/apn-6/apn-6.html}, 2012.
\newblock accessed August 30, 2021.

\bibitem{DBLP:conf/eurocrypt/Nyberg93}
K.~Nyberg.
\newblock Differentially uniform mappings for cryptography.
\newblock In T.~Helleseth, editor, {\em Advances in Cryptology - {EUROCRYPT}
  '93, Proceedings}, volume 765 of {\em LNCS}, pages 55--64, Berlin,
  Heidelberg, 1993. Springer.

\bibitem{DBLP:conf/fse/Nyberg94}
K.~Nyberg.
\newblock S-boxes and round functions with controllable linearity and
  differential uniformity.
\newblock In B.~Preneel, editor, {\em Fast Software Encryption: Second
  International Workshop, Proceedings}, volume 1008 of {\em LNCS}, pages
  111--130, Berlin, Heidelberg, 1994. Springer.

\bibitem{DBLP:conf/crypto/NybergK92}
K.~Nyberg and L.~R. Knudsen.
\newblock Provable security against differential cryptanalysis.
\newblock In E.~F. Brickell, editor, {\em Advances in Cryptology - {CRYPTO}
  '92, Proceedings}, volume 740 of {\em LNCS}, pages 566--574, Berlin,
  Heidelberg, 1992. Springer.

\bibitem{Perrin_sboxu}
L.~Perrin.
\newblock {sboxU}.
\newblock {\em GitHub repository}, 2017.
\newblock Availabe via \url{https://github.com/lpp-crypto/sboxU}.

\bibitem{sagemath}
{Sage Developers}.
\newblock {\em {S}ageMath, the {S}age {M}athematics {S}oftware {S}ystem
  ({V}ersion 9.3)}, 2021.
\newblock {\tt https://www.sagemath.org}.

\bibitem{DBLP:journals/dcc/Taniguchi19a}
H.~Taniguchi.
\newblock On some quadratic {APN} functions.
\newblock {\em Des. Codes Cryptogr.}, 87(9):1973--1983, 2019.

\bibitem{weng2013quadratic}
G.~Weng, Y.~Tan, and G.~Gong.
\newblock On quadratic almost perfect nonlinear functions and their related
  algebraic object.
\newblock In {\em Workshop on Coding and Cryptography, WCC.}, 2013.

\bibitem{cryptoeprint:2019:1491}
Y.~Yu, N.~Kaleyski, L.~Budaghyan, and Y.~Li.
\newblock Classification of quadratic apn functions with coefficients in {F2}
  for dimensions up to 9.
\newblock {\em Finite Fields and Their Applications}, 68:101733, 2020.

\bibitem{DBLP:journals/iacr/YuP21}
Y.~Yu and L.~Perrin.
\newblock Constructing more quadratic {APN} functions with the {QAM} method.
\newblock {\em {IACR} Cryptol. ePrint Arch.}, 2021:574, 2021.

\bibitem{DBLP:journals/iacr/YuWL13}
Y.~Yu, M.~Wang, and Y.~Li.
\newblock A matrix approach for constructing quadratic {APN} functions.
\newblock {\em {IACR} Cryptol. ePrint Arch.}, 2013:7, 2013.

\bibitem{DBLP:journals/dcc/YuWL14}
Y.~Yu, M.~Wang, and Y.~Li.
\newblock A matrix approach for constructing quadratic {APN} functions.
\newblock {\em Des. Codes Cryptogr.}, 73(2):587--600, 2014.

\end{thebibliography}
\end{document}